\journal{Journal}
\NewDocumentCommand{\evalat}{sO{\big}mm}{%
  \IfBooleanTF{#1}
   {\mleft. #3 \mright|_{#4}}
   {#3#2|_{#4}}%
}
\newcommand{\partevalat}[3]{\ensuremath{ \evalat[\bigg]{\frac{\partial #1}{\partial #2}}{#3} }} 
\newcommand{\rvect}[1]{\begin{bmatrix} #1 \end{bmatrix}}
\newcommand{\deldel}[2]{\frac{\partial #1}{\partial #2}}
\newcommand{\undbar}[1]{\ensuremath{\underaccent{\bar}{#1}}}
\newcommand{\zerobd}{\ensuremath{\bold{0}}}
\newcommand{\Dbd}{\ensuremath{\bold{D}}}
\newcommand{\Ubd}{\ensuremath{\bold{U}}}
\newcommand{\Xbd}{\ensuremath{\bold{X}}}
\newcommand{\Zbd}{\ensuremath{\bold{Z}}}
\newcommand{\Ac}{\ensuremath{\mathcal{A}}}
\newcommand{\Bc}{\ensuremath{\mathcal{B}}}
\newcommand{\Ic}{\ensuremath{\mathcal{I}}}
\newcommand{\Yc}{\ensuremath{\mathcal{Y}}}
\newcommand{\Dc}{\ensuremath{\mathcal{D}}}
\newcommand{\Lc}{\ensuremath{\mathcal{L}}}
\newcommand{\Zc}{\ensuremath{\mathcal{Z}}}
\newcommand{\bigE}{\ensuremath{\mathcal{E}}}
\newcommand{\ch}{\ensuremath{\hat{c}}}
\newcommand{\xh}{\ensuremath{\hat{x}}}
\newcommand{\uh}{\ensuremath{\hat{u}}}
\newcommand{\yh}{\ensuremath{\hat{y}}}
\newcommand{\Vh}{\ensuremath{\hat{V}}}
\newcommand{\Gh}{\ensuremath{\hat{G}}}
\newcommand{\Yhc}{\ensuremath{\hat{\mathcal{Y}}}}
\newcommand{\Dd}{\ensuremath{\Delta d}}
\newcommand{\Dz}{\ensuremath{\Delta z}}
\newcommand{\Ghi}{\Gh^i}
\newcommand{\Vhi}{\Vh^i}
\newcommand{\Real}{\ensuremath{\mathbb{R}}}
\newcommand{\nx}{\ensuremath{n_x}}
\newcommand{\nd}{\ensuremath{n_d}}
\newcommand{\Dbold}{\ensuremath{\mathbf{D}}}
\newcommand{\Ubold}{\ensuremath{\mathbf{U}}}
\newcommand{\Xbold}{\ensuremath{\mathbf{X}}}
\newcommand{\Dseq}{\Dbold}
\newcommand{\Useq}{\Ubold}
\newcommand{\Xseq}{\Xbold}
\newcommand{\xref}{\ensuremath{x^r}}
\newcommand{\uref}{\ensuremath{u^r}}
\newcommand{\dref}{\ensuremath{d^r}}
\newcommand{\yref}{\ensuremath{y^r}}
\newcommand{\zref}{\ensuremath{z^r}}
\newcommand{\OrignalMPCProblem}{\textbf{\mbox{Problem~1}}}
\newcommand{\RNMPCProblem}{\textbf{\mbox{Problem~2}}}
\newcommand{\AlgGenericRNMPC}{\mbox{Algorithm \ref{alg:Generic_RNMPC}}}
\newcommand{\AlgOptimValidRefTrajectory}{\mbox{Algorithm \ref{alg:OptimValidRefTrajectory}}}
\newcommand{\AlgErrorSetsComp}{\mbox{Algorithm \ref{alg:ErrorSetsComputation}}}
\newcommand{\AlgLagRemOverapprox}{\mbox{Algorithm \ref{alg:LagRemOverapprox}}}
\newcommand{\SLRNMPC}{\mbox{SL-MPC}}
\newcommand{\IPOPTOne}{\mbox{IPOPT-MPC1}}
\newcommand{\IPOPTTwo}{\mbox{IPOPT-MPC2}}
\newcommand{\SLS}{\mbox{SLS}}
\newcommand{\MainTestCase}{Test Case 1}
\DeclareMathOperator{\IH}{IH} 
\DeclareMathOperator{\diag}{diag} 
\newcommand{\Uref}{\ensuremath{\Ubd^r}}
\newcommand{\Xref}{\ensuremath{\Xbd^r}}
\newcommand{\Dref}{\ensuremath{\Dbd^r}}
\algnewcommand{\Initialize}[1]{%
  \State \textbf{Initialize:}
  \Statex \hspace*{\algorithmicindent}\parbox[t]{.8\linewidth}{\raggedright #1}
}
\algnewcommand\True{\textbf{true}\space}
\algnewcommand\False{\textbf{false}\space}
\algnewcommand\Success{\texttt{success}\space}
\algnewcommand\maxIters{\texttt{maxIters}\space}
\newcommand{\mytitle}{Robust Model Predictive Control for nonlinear discrete-time systems using iterative time-varying constraint tightening
}
\DeclareRobustCommand{\VAN}[3]{#2} 
\begin{document}
\title{\mytitle{}}

\author[1]{Daniel D. Leister}
\author[2]{Justin P. Koeln}

\authormark{LEISTER \textsc{et al.}}
\titlemark{\mytitle{}}

\address[1]{\orgdiv{Mechanical Engineering Department}, \orgname{The University of Texas at Dallas}, \orgaddress{\state{Texas}, \country{USA}}}
\address[2]{\orgdiv{Mechanical Engineering Department}, \orgname{The University of Texas at Dallas}, \orgaddress{\state{Texas}, \country{USA}}}

\corres{Corresponding author Justin Koeln. \email{justin.koeln@utdallas.edu}}

\fundingInfo{Office of Naval Research. Award number:N00014-22-1-2247.}

\abstract[Abstract] {Robust Model Predictive Control (MPC) for nonlinear systems is a problem that poses significant challenges as highlighted by the diversity of approaches proposed in the last decades. Often compromises with respect to computational load, conservatism, generality, or implementation complexity have to be made, and finding an approach that provides the right balance is still a challenge to the research community. This work provides a contribution by proposing a novel shrinking-horizon robust MPC formulation for nonlinear discrete-time systems. By explicitly accounting for how disturbances and linearization errors are propagated through the nonlinear dynamics, a constraint tightening-based formulation is obtained, with guarantees of robust constraint satisfaction. The proposed controller relies on iteratively solving a Nonlinear Program (NLP) to simultaneously optimize system operation and the required constraint tightening. Numerical experiments show the effectiveness of the proposed controller with three different choices of NLP solvers as well as significantly improved computational speed, better scalability, and generally reduced conservatism when compared to an existing technique from the literature.}

\keywords{Nonlinear Model Predictive Control (MPC), Robust MPC, Constrained systems, Successive linearization.}

\maketitle


\section{Introduction}
\label{sect:introduction}

The ability of Model Predictive Control (MPC) to optimize system operation while adhering to state and actuator constraints is one of the main reasons for its widespread popularity. Although under certain conditions nominal MPC possesses some inherent robustness to external disturbances \cite{limonmarruedoStabilityAnalysisSystems2002}, many applications require stronger robustness guarantees. This motivated the development of robust MPC formulations, where the potential effect of external unknown disturbances is explicitly taken into account and robust constraint satisfaction can be guaranteed for all future time steps. For linear systems, many robust MPC approaches rely on modifying the optimization problem by tightening the state and input constraints such that, by ensuring that the nominal predicted state trajectory is restricted to a tighter set of constraints, the real disturbed trajectories are still within the original constraints \cite{richardsRobustModelPredictiveLTV2005} or within a ``tube'' contained in the original constraints \cite{mayneRobustModelPredictive2005}. 

Multiple approaches have been proposed to adapt robust MPC strategies for the nonlinear case, typically by applying some form of constraint tightening \cite{bravoRobustMPCConstrained2006,pinRobustModelPredictive2009,cannonRobustTubesNonlinear2011,mayneTubebasedRobustNonlinear2011,zhaoRobustContractiveEconomic2018,kohlerComputationallyEfficientRobust2021,moratoRobustNonlinearPredictive2021,messererEfficientAlgorithmTubebased2021,doff-sottaDifferenceConvexFunctions2022,leemanRobustOptimalControl2023,kimJointSynthesisTrajectory2024}, with few exceptions \cite{murilloIteratedNonlinearModel2016}. However, striking a practical balance between computational complexity, conservatism, and implementation complexity remains a challenge for nonlinear robust MPC. Part of the challenge comes from the fact that constraint tightening requires quantifying the effect of uncertainties based on system dynamics. Doing so for nonlinear systems is not trivial and often compromises have to be made. Morato et al\cite{moratoRobustNonlinearPredictive2021}, for example, propose using the framework of Linear Differential Inclusion, which leverages many of the benefits of linearity while still capturing the nonlinear nature of the system, but note that this may yield conservative disturbance propagation sets. K{\"o}hler et al\cite{kohlerComputationallyEfficientRobust2021} propose tightening the inequality constraints through the use of scalar variables that capture the ``tube sizes'', computed based on level sets of Lyapunov functions. The approach though requires that the system be incrementally stabilizable. Mayne et al\cite{mayneTubebasedRobustNonlinear2011} somewhat simplify the problem of finding tightened constraint sets by defining these sets as scaled-down versions of the original input and state constraint sets, although it is not clear how much conservatism this may introduce. Pin et al\cite{pinRobustModelPredictive2009} propose computing the constraint tightening based on Lipschitz constants of the nonlinear dynamics function, which tend to produce conservative results \cite{moratoRobustNonlinearPredictive2021,kohlerComputationallyEfficientRobust2021,doff-sottaDifferenceConvexFunctions2022}. Separate control and prediction horizons are proposed to overcome this issue by applying constraint tightening and a terminal set constraint only within the shorter control horizon , thereby reducing the propagation of errors\cite{pinRobustModelPredictive2009}. Doff-Sotta and Cannon\cite{doff-sottaDifferenceConvexFunctions2022} derive a Tube-MPC algorithm for difference-of-convex systems, a relatively general class of nonlinear systems. Their algorithm relies on the idea of iteratively solving a series of convex problems at each time step, where each problem solution generates a reference trajectory used to obtain a new linear system representation and a new linear feedback controller. However, their formulation does not consider external disturbances.

Constraint tightening-based robust Nonlinear MPC (NMPC) approaches commonly use a Linear Time-Varying (LTV) approximation of the nonlinear dynamics, obtained by linearizing about a reference trajectory, to facilitate the tightening of constraints. This naturally introduces the issue of having to bound linearization errors. Also, these formulations often use a linear feedback component in the control input, acting on the error between the nominal and the actual system trajectory. This provides the controller with the ability to limit the effect of disturbances and linearization errors on the nominal trajectories when predicting the future effect of disturbances. These approaches have the challenge of coping both with linearization errors \textit{and} the effect of external disturbances. Leeman et al\cite{leemanRobustOptimalControl2023} argue that the traditional approach of solving robust optimal control problems by optimizing a reference trajectory and designing a stabilizing feedback controller offline introduces conservatism. They propose lumping linearization, parameterization, and disturbance errors into sets that are parameterized by decision variables in the optimization problem. Similarly, the linear feedback gains are also part of the decision variables. This setup arguably removes complexity from offline design and reduces conservatism, with the potential drawback of increased complexity of the resulting optimization problem.

The controller proposed in this work shares similarities with some of the existing work in that i) LTV models obtained from reference trajectories form the basis of computing overapproximations of error sets \cite{messererEfficientAlgorithmTubebased2021,doff-sottaDifferenceConvexFunctions2022,leemanRobustOptimalControl2023,kimJointSynthesisTrajectory2024}, ii) an iterative procedure is used to optimize planned trajectories \cite{murilloIteratedNonlinearModel2016,messererEfficientAlgorithmTubebased2021,doff-sottaDifferenceConvexFunctions2022,kimJointSynthesisTrajectory2024}, iii) efficient set representations such as zonotopes and intervals are used to reduce computation times \cite{bravoRobustMPCConstrained2006,pinRobustModelPredictive2009}, and iv) a fallback control option is used when no adequate solution to the optimization problem can be found at certain time steps \cite{bravoRobustMPCConstrained2006}. However, this work differs from the existing works in some key aspects. The linear feedback controller and the corresponding constraint tightening are recomputed \textit{online and iteratively}, which is enabled by the use of efficient set representations and a practical approach to account for the effect of linearization errors and external disturbances. The constraint tightening approach used here is inspired by the method developed by Richards\cite{richardsRobustModelPredictiveLTV2005} for robust MPC for LTV systems, which differs from the traditional tube-based MPC formulation. With this strategy, the approach used in this work addresses the conservatism issue mentioned by Leeman et al\cite{leemanRobustOptimalControl2023} without introducing additional complexity to the optimization problem. Note that, in the work of Kim et al\cite{kimJointSynthesisTrajectory2024} and Messerer et al\cite{messererEfficientAlgorithmTubebased2021} the optimized trajectory, the gains of the linear feedback controller, and the constraint tightening are also computed iteratively in a similar fashion to the approach used in this work. However, both references make use of ellipsoids for the representation of error sets while this work uses zonotopes and intervals. Moreover, Kim et al\cite{kimJointSynthesisTrajectory2024} rely on the estimation of local Lipschitz constants and the control gains are obtained from the solution to a Semidefinite Program (SDP), while Messerer et al\cite{messererEfficientAlgorithmTubebased2021} use a Ricatti recursion procedure to compute the control gains. Although their main algorithms share similarities with the proposed technique, the resulting formulations are considerably different. A recent work by Leeman et al\cite{leemanFastSystemLevel2024} combines benefits of System Level Synthesis (SLS)\cite{leemanRobustOptimalControl2023} with the fast Ricatti recursion scheme\cite{messererEfficientAlgorithmTubebased2021} to derive a robust MPC for Linear Time-Varying systems with significant speed improvements compared to formulations based on off-the-shelf solvers. An extension to nonlinear systems is provided in the appendix of the referred work though no numerical examples are provided for the nonlinear case. Therefore, a comparison provided in Section \ref{sec:NumericalExamples} is restricted to the formulation from \citenum{leemanRobustOptimalControl2023}. 

Therefore, the main contribution of this work lies in proposing a novel shrinking-horizon robust NMPC formulation for discrete-time systems that relies on iterative time-varying constraint tightening to obtain robust constraint satisfaction guarantees and the ability to run in real time with potentially suboptimal performance. The shrinking horizon formulation is motivated by applications where the controller is expected to operate for a fixed period of time and steady-state equilibria do not simultaneously satisfy input and state constraints, such as aircraft missions that consist entirely of transient operation \cite{domanFuelFlowTopology2018}. Technical details for formulating an MPC controller with a shrinking horizon are provided by Koeln and Alleyne\cite{koelnTwoLevelHierarchicalMissionBased2018}.

The remainder of this paper is organized as follows. Section \ref{sec:ProblemStatement} defines the class of nonlinear dynamic systems considered and the control problem to be solved. Section \ref{sec:ComputationErrorSets} details the computation of error sets used for constraint tightening. Section \ref{sec:LQRConsTighteningAndRob} shows how constraint tightening based on a time-varying Linear Quadratic Regulator (LQR) can provide robust constraint satisfaction. Section \ref{sec:ProposedController} details the proposed controller formulation while Section \ref{sec:ConsiderationsPracticalImplementation} provides some considerations for practical implementation. Finally, Section \ref{sec:NumericalExamples} details results of using the proposed controller in simulations with a nonlinear system, including a performance comparison with one of the existing techniques in the literature.

\subsection{Notation}
The letters $i$, $k$, $l$, and $m$, when used in the subscripts, are used for time step indexing in the discrete-time dynamics of the models and controllers. The set of integers in the interval $ [k,k+i] \subset \mathbb{R}$ is denoted $\mathbb{Z}_k^{k+i} $.  In the context of MPC predictions, a double-index notation is used such as in $x_{k+i|k}$, where the first subscript $k+i$ is the time step index of the predicted variable and the second subscript $k$ is the time step index when such prediction is made. When all matrices have the same indexes, the shorthand notation $(A+BK)_{k+i|k}$ will be used to represent $(A_{k+i|k}+B_{k+i|k}K_{k+i|k})$. The variables $\nx$, $n_u$, and $\nd$ represent the number of states, inputs, and disturbances in the dynamic models. Superscript $r$ is used for variables of a \textit{reference} trajectory used to obtain LTV models. The variable $N$ refers to the final time step in the system operation. Bold capital letters such as $\Xseq$ represent a collection of vectors that refer to a trajectory in time. The variable $\Xseq_k=\{x_{k|k},...,x_{k+N|k}\}$, with $x_{k+i|k} \in \mathbb{R}^{\nx}$, refers to a state trajectory, $\Useq_k =\{u_{k|k},...,u_{k+N-1|k}\}$, with $u_{k+i|k} \in \mathbb{R}^{n_u}$, to an input trajectory, and $\Dseq_k=\{d_{k},...,d_{k+N-1}\}$, with $d_{k+i} \in \mathbb{R}^{\nd}$, to a disturbance trajectory. The operation $\Ac \oplus \Bc = \{a + b \ |  \ a \in \Ac \text{ and } b \in \Bc\}$ is the Minkowski sum and  $\Ac \ominus \Bc = \{a \ | \ (a + b) \in \Ac, \ \forall b \in \Bc\}$ is the Pontryagin difference of sets $\Ac$ and $\Bc$. The combined product of matrices $M_i$ is represented with $\prod_i^k M_i$, where matrices with increasing indexes are  \textit{multiplied to the left}. The symbol $I$ is used for the identity matrix of proper dimensions. The $\diag(x)$ operator is used to represent a diagonal matrix whose diagonal elements are given by the elements of the vector $x$.



\section{Definitions and problem statement} \label{sec:ProblemStatement}

Consider a system that operates for a fixed time period with discrete-time nonlinear dynamics given by
\begin{align}
    x_{k+1} = f(x_k,u_k,d_k), \label{eq:NonlinGenericDynamics}
\end{align}
where the nonlinear mapping $f(\cdot)$ is twice continuously differentiable, $x_k \in \Real^{\nx}$ represents the system states, $u_k \in \Real^{n_u}$ the inputs, and $d_k \in \Real^{\nd}$ the external disturbances. The system operation is considered to happen over $N \in \mathbb{Z}_+$ time steps such that $k \in \mathbb{Z}_0^{N}$.

Also, the system is subject to output constraints of the form
\begin{align}
    y_k  = (C_kx_k + D_k u_k) \in \Yc_k,  \label{eq:OrigOutputConstraints}
\end{align}
where $C_k$ and $D_k$ are suitably defined matrices such that the output constraint sets $\Yc_{k}$ represent input and state constraints of the system being controlled or any linear combination thereof. With $ k \in \mathbb{Z}_0^{N} $, it is assumed that $D_N = 0$ in \eqref{eq:OrigOutputConstraints} such that $ y_N \in \Yc_N $ imposes a terminal constraint on $ x_N $ alone.

\begin{assumption} \label{assump:YckCompactAndClosed}
    Sets $\Yc_k$ are assumed to be compact and convex sets of suitable dimensions.
\end{assumption}

At any time step $k$, the total disturbance to the system can be split into a known component and an unknown component
\begin{align}
    d_k = \dref_k + \Dd_k.
\end{align}
Here, $\dref_k$ is the known part (i.e. the reference disturbance) and refers to the expected or predicted nominal disturbance trajectory. The term $\Dd_k$ is the unkown component. 
\begin{assumption}\label{assump:knownBoundsDisturb}
    The disturbance deviation $\Dd_{k}$ is bounded, for all $k\in \mathbb{Z}_0^{N-1}$, by some known compact set $\Dc_{k}$ centered at the origin.
\end{assumption}

The goal of this work is to derive a nonlinear shrinking-horizon robust MPC to control system \eqref{eq:NonlinGenericDynamics} while providing robustness guarantees for any possible realization of disturbance trajectory $\Dseq_0$. Ultimately, the goal is to find a controller that optimizes system operation according to the optimization problem

\OrignalMPCProblem:
\begin{subequations}
\begin{align} 
    	& \min_{\Xseq_0,\Useq_0} \ell(\Xseq_0,\Useq_0), \\
    	&\text{s.t.} \, \forall k \in \mathbb{Z}_0^{N-1} \text{ and } \forall \Dd_{k} \in  \Dc_{k}, \nonumber \\
    	&\quad \quad x_{k+1} =f(x_{k},u_{k},d^r_{k}+\Dd_{k}),  \label{eq:NonlinearDynamicsConstraint} \\
        &\quad \quad y_{k} =  C_{k}x_{k} + D_{k} u_{k}, \label{eq:OutputEquationConstraint}  \\
    	&\quad \quad y_{k} \in \Yc_{k}, \; y_{N} \in \Yc_{N},\label{eq:OutputConstraint} 
\end{align}
\end{subequations}
where $\ell(\Xseq_0,\Useq_0)$ is some convex function of the state and input variables and $x_{0}$ is a known initial condition.



\section{Computation of error sets} \label{sec:ComputationErrorSets}

Since error sets play a key role in tightening constraints to achieve robustness, this section details how these sets can be computed based on bounds on linearization errors and disturbances. The error sets are obtained based on an assumed feedback behavior of the controller for future time steps as well as how the effects of linearization errors and disturbances are propagated through the linearized system dynamics. 

\subsection{Linearized system model}
Before deriving a linearized model of the system, a reference trajectory is defined as follows. 

\begin{definition}[Reference Trajectory] \label{def:ReferenceTrajectory}
In the context of this work, a reference trajectory $\Zbd^r_k=\{z^r_k,...,z^r_{N-1}\}$, where $\zref_k = (\xref_k,\uref_k,\dref_k)$, is a trajectory around which the dynamics in \eqref{eq:NonlinGenericDynamics} are linearized to obtain a LTV model of the system. Reference trajectories do not necessarily satisfy the dynamics in \eqref{eq:NonlinGenericDynamics}, unless otherwise stated.
\end{definition}

Considering some reference trajectory $\Zbd^r_k$, the nonlinear system dynamics \eqref{eq:NonlinGenericDynamics} can be rewritten using a Taylor expansion as follows
\begin{align}
    x_{k+1} = A_k x_k + B_k u_k + V_k d_k + c_k + \zeta(\xi_k), \label{eq:LTVGenericDynamics} 
\end{align}
where 
\begin{subequations} \label{eq:linearizationMatrices}
\begin{align} 
& A_k = \partevalat{f(x,u,d)}{x}{(\xref_k,\uref_k,\dref_k)}, \\
& B_k = \partevalat{f(x,u,d)}{u}{(\xref_k,\uref_k,\dref_k)}, \\
& V_k = \partevalat{f(x,u,d)}{d}{(\xref_k,\uref_k,\dref_k)}, \\
& c_k = f(\xref_k,\uref_k,\dref_k) - A_k \xref_k - B_k \uref_k - V_k \dref_k,
\end{align}
\end{subequations}
and $\zeta(\xi_k)$ is the Lagrange remainder (linearization error) of the Taylor expansion, with $\xi_k \in \Zc^{\xi}_k = \{ \zref_k + \alpha(z_k - \zref_k)| \alpha \in [0,1]  \}$.

A nominal LTV system representation is obtained when the linearization error is ignored and only reference disturbances are considered in \eqref{eq:LTVGenericDynamics} 
\begin{align}
    \xh_{k+1} = A_k\xh_k + B_k \uh_k + \ch_k,  \label{eq:NominalLTVDynamics}
\end{align}
where
\begin{align*} 
 \ch_k = f(\xref_k,\uref_k,\dref_k) - A_k \xref_k - B_k \uref_k.
\end{align*}

\subsection{Derivation of error sets from linearized dynamics}

The dynamics in \eqref{eq:NonlinGenericDynamics} are said to represent the \textit{nominal nonlinear dynamics} if the system evolution is considered when subject to a reference disturbance $\dref_k$ and not the actual disturbance $d_k$. 
The second subscript here is applied to $x_{k+i|k}$ to make clear that this is the time step when the error is being predicted.
The error between the actual states and the nominal LTV state trajectory is defined as
\begin{align}
    e_{k+i|k} = x_{k+i|k} - \xh_{k+i|k}. \label{eq:TotalErrorDefinition} 
\end{align}

In order to obtain upper bounds on this error, the controller is considered to take the form
    \begin{align}
       u_{k+i|k} &= \uh_{k+i|k} + K_{k+i|k}e_{k+i|k}, \label{eq:MPCplusLQRControl}
    \end{align}
where, similar to the derivation by Richards\cite{richardsRobustModelPredictiveLTV2005}, the linear feedback control matrices $K_{k+i|k}$ are obtained from a time-varying finite-horizon LQR controller formulation as detailed in Appendix~\ref{sec:LQRControllerEqs}. 

Now note that for the current time step, when $i=0$, $e_{k|k} =0$ due to the fact that all state trajectories start from the current measured state, which is assumed to be known. In other words: $x_{k|k} = \xh_{k|k}$. Therefore, starting from the next time step, where  $e_{k+1|k} = x_{k+1|k} - \xh_{k+1|k}$, and using \eqref{eq:LTVGenericDynamics} and \eqref{eq:MPCplusLQRControl} to propagate the error dynamics forward, it can be shown that
\begin{equation} \label{eq:ErrorDynamics}
	e_{k+i|k} =  \sum_{l=0}^{i-1} \Vhi_{k+l|k}\Dd_{k+l} + \sum_{l=0}^{i-1} \Ghi_{k+l|k}\zeta(\xi_{k+l|k}), 
\end{equation}
with
\begin{align*}
    \Gh^i_{k+l|k}  &= \begin{cases*}
		\left({\displaystyle\prod_{m=l+1}^{i-1}(A+BK)_{k+m|k}} \right) ,& \text{ for $l\in \mathbb{Z}_0^{i-2}$}, \\
		I,& \text{ for $l=i-1$},
	\end{cases*}     \\
    \Vhi_{k+l|k}  &= \Gh^i_{k+l|k}  V_{k+l|k},\text{ for $l\in \mathbb{Z}_0^{i-1}$.} 
\end{align*}

Using Assumption \ref{assump:knownBoundsDisturb} and \eqref{eq:ErrorDynamics}, the total error at each time step can be bound by
\begin{align}
	\bigE_{k+i|k} = \bigoplus_{l=0}^{i-1}\left( \Vhi_{k+l|k}\Dc_{k+l} \oplus \Ghi_{k+l|k} \Lc(\Zc_{k+l|k})\right),  \label{eq:TotalErrorSetsEq}
\end{align}
where $\Zc_{k+l|k}$ is such that $\Zc^{\xi}_{k+l|k}\subseteq \Zc_{k+l|k}$ and $\Lc$ is a set operator that overapproximates the set of possible values of the Lagrange remainder $\zeta(\xi_{k+l|k})$ given $\Zc_{k+l|k}$. There are multiple methods of implementing the $\Lc$ operator to compute overapproximations for linearization errors \cite{althoffCORA2022Manual2022}. The method used in this work is detailed in Section \ref{sec:BoundingLagrangeRemainders}.

To obtain the sets $\Zc_{k+l|k}$, consider the following definition for the elements of the set
\begin{align*}
    z_{k+l|k} = \zref_{k+l|k} + \Dz_{k+l|k},
\end{align*}
with
\begin{align*}
\Dz_{k+l|k} &= z_{k+l|k} - \zref_{k+l|k} = \begin{bmatrix} x_{k+l|k} - \xref_{k+l|k} \\ u_{k+l|k} - \uref_{k+l|k} \\ d_{k+l} - \dref_{k+l|k}\end{bmatrix} . 
\end{align*}

\begin{assumption}\label{assump:OpenLoopControlComponent}
    The open-loop control component in \eqref{eq:MPCplusLQRControl} is taken as the reference input trajectory, i.e.,  $\uh_{k+i|k} = \uref_{k+i|k}$. 
\end{assumption}

If Assumption \ref{assump:OpenLoopControlComponent} holds and considering that $\xh_{k+l|k}$ is obtained from the LTV dynamics linearized around $\xref_{k+l|k}$ and both consider the nominal reference disturbance, then $\Dz_{k+l|k}$ can be rewritten as
\begin{align}
\Dz_{k+l|k} &=  \begin{bmatrix} x_{k+l|k} - \xh_{k+l|k} \\ u_{k+l|k} - \uh_{k+l|k} \\ d_{k+l} - \dref_{k+l|k}\end{bmatrix}. \label{eq:deltaz_definition}
\end{align}

Now using \eqref{eq:TotalErrorDefinition} and \eqref{eq:MPCplusLQRControl}, $\Dz_{k+l|k}$ becomes
\begin{align*}
   \Dz_{k+l|k} = \begin{bmatrix} e_{k+l|k} \\ K_{k+l|k}e_{k+l|k} \\ \Dd_{k+l} \end{bmatrix}. 
\end{align*}
Therefore, the sets $\Zc_{k+l|k}$ can be written as
\begin{align}
    \Zc_{k+l|k} = \zref_{k+l|k} \oplus \left( \bigE_{k+l|k} \times K_{k+l|k}\bigE_{k+l|k} \times \Dc_{k+l} \right). \label{eq:ExpressionSetZc}
\end{align}
From \eqref{eq:TotalErrorSetsEq} and \eqref{eq:ExpressionSetZc} , it can be seen that the sets $\bigE$ depend on the sets $\Zc$ computed for previous time steps while the sets $\Zc_{k+l|k}$ depend on the corresponding $\bigE_{k+l|k}$ sets. \AlgErrorSetsComp{} summarizes the steps needed to compute these sets sequentially based on a reference trajectory.

\begin{algorithm}[h]
	\caption{\enskip Error Sets Computation}
	\label{alg:ErrorSetsComputation}
	\begin{algorithmic}[1]
	   \Require $\zref_{k+i|k}$ for all $i\in \mathbb{Z}_0^{N-1}$
        \State $\bigE_{k|k} = \zerobd$ \Comment{due to $e_{k|k}=0$}
        \State  $\Zc_{k|k} = \zref_{k|k} \oplus  \left( \zerobd \times \zerobd \times \Dc_{k} \right)$ \Comment{from \eqref{eq:ExpressionSetZc}}
		\For{$i=1...N-1$}
			\State compute $\bigE_{k+i|k}$ using \eqref{eq:TotalErrorSetsEq} \label{line:AlgEsets_computebigE}
            \State compute $\Zc_{k+i|k}$ using \eqref{eq:ExpressionSetZc}
		\EndFor
        \State compute $\bigE_{k+N|k}$ using \eqref{eq:TotalErrorSetsEq} \label{line:AlgEsets_computebigE_N}
		\State \Return $\bigE_{k+i|k}$ for all $i\in \mathbb{Z}_1^{N}$ 
	\end{algorithmic}
\end{algorithm}



\section{LQR-based constraint tightening and robust constraint satisfaction} \label{sec:LQRConsTighteningAndRob}

With the method outlined in \AlgErrorSetsComp{} to compute error sets, this section demonstrates how the controller defined in \eqref{eq:MPCplusLQRControl} provides a suboptimal control law that robustly satisfies the system's output constraints, given a reference trajectory that satisfies certain conditions at time step $k=0$. 

\subsection{Preliminaries} \label{sec:Preliminaries}

\begin{assumption}\label{assump:Zk_Lc_containment}
    Given two sets $\Ac$ and $\Bc$, if $\Ac \subseteq \Bc$ then $\Lc(\Ac) \subseteq \Lc(\Bc)$, with $\Lc$ the set operator defined in Section \ref{sec:ComputationErrorSets}.
\end{assumption}

\begin{proposition} [Morphological Opening Property \cite{castroAnalyticalSolutionsMinkowski2013}]
Given compact and closed sets $\Ac$ and $\Bc$ containing the origin, then
\begin{align}
    (\Ac \ominus \Bc) \oplus \Bc \subseteq  \Ac. \label{eq:PropPontryag4}
\end{align}
    
\end{proposition}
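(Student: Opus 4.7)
The plan is to give a direct, definition-unpacking proof. I would take an arbitrary element $x \in (\Ac \ominus \Bc) \oplus \Bc$ and show $x \in \Ac$. By the definition of the Minkowski sum stated in the Notation section, there exist $c \in \Ac \ominus \Bc$ and $b \in \Bc$ such that $x = c + b$. By the definition of the Pontryagin difference, the condition $c \in \Ac \ominus \Bc$ means exactly that $c + b' \in \Ac$ for every $b' \in \Bc$. Specializing to $b' = b$, which is admissible since $b \in \Bc$, yields $x = c + b \in \Ac$. As $x$ was arbitrary, the inclusion $(\Ac \ominus \Bc) \oplus \Bc \subseteq \Ac$ follows.

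A minor caveat is the degenerate case $\Ac \ominus \Bc = \emptyset$; in that case $(\Ac \ominus \Bc) \oplus \Bc = \emptyset$ and the inclusion holds vacuously, so it need not be treated separately. I would also note in the writeup that the hypotheses of compactness, closedness, and containment of the origin are not actually required for the inclusion itself; they serve instead to guarantee that the two Minkowski-type operations are well-posed, that $\Ac \ominus \Bc$ is nonempty in the cases of interest for the controller, and that all intermediate sets remain in a class amenable to the zonotope/interval computations used in Section \ref{sec:ComputationErrorSets}.

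I do not expect any genuine obstacle: the proof is a single chain of implications through the two definitions, and the opening property is classical. The only thing to be careful about is keeping the quantifiers straight, in particular making clear that the ``for every $b' \in \Bc$'' clause coming from membership in the Pontryagin difference is what permits instantiating $b' = b$ with the same $b$ used in the Minkowski decomposition $x = c + b$.
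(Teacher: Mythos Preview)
Your proof is correct and is the standard elementary argument via the definitions of $\oplus$ and $\ominus$. Note, however, that the paper does not actually prove this proposition; it simply states it with a citation to \cite{castroAnalyticalSolutionsMinkowski2013}, so there is no paper proof to compare against. Your observation that the topological hypotheses play no role in the inclusion itself is also accurate.
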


Also, consider the tightened output constraint sets $\Yhc_{k+i|k}$, which are related to the original output constraints such that
\begin{align}
    \Yhc_{k+i|k} = \Yc_{k+i} \ominus \left[(C_{k+i} + D_{k+i}K_{k+i|k})\bigE_{k+i|k}\right]. \label{eq:TightenedOutputConstraintSets}
\end{align}
The tightened constraint sets $\Yhc_{k+i|k}$ depend directly on $K_{k+i|k}$ and $\bigE_{k+i|k}$. These error sets $\bigE_{k+i|k}$, as seen in Section \ref{sec:ComputationErrorSets}, also depend on the LQR gain matrices, namely on $K_{k+l|k}$ for all $l\in\mathbb{Z}_1^{i-1}$. However, these $K_{k+i|k}$ are obtained by linearization around the references $\zref_{k+i|k} $. Therefore, the tightened constraint sets $\Yhc_{k+i|k}$ are directly related to the specific references $\zref_{k+i|k}$. 

\begin{definition}[Valid Reference Trajectory] \label{def:validRefTrajectory}
    At time step $ k $, a reference trajectory $\Zbd^r_k =\{z^r_{k|k},...,z^r_{N-1|k}\}$ is said to be \textit{valid} if it satisfies the nonlinear dynamics \eqref{eq:NonlinGenericDynamics} and the resulting outputs from \eqref{eq:OrigOutputConstraints} satisfy the corresponding tightened output constraint sets $\Yhc_{k+i|k}$, $ i\in\mathbb{Z}_0^{N-k}$, obtained using this reference trajectory. 
\end{definition}

\subsection{Robust constraint satisfaction}
Now, the main result of this section can be stated as follows.
\begin{theorem} \label{theo:feasibleImpliesYConstraintSatisfaction}
    For any time step $k \in \mathbb{Z}_0^{N-1}$, if disturbances are bounded according to Assumption \ref{assump:knownBoundsDisturb}, a valid reference trajectory $\Zbd^r_k$ is known, and system \eqref{eq:NonlinGenericDynamics} is controlled by the input \eqref{eq:MPCplusLQRControl} with $\uh_{k+i|k}= \uref_{k+i|k}$ for all $i\in \mathbb{Z}_0^{N-k-1}$, then the actual system outputs $y_{k+i}$ will satisfy the original output constraints $\Yc_{k+i}$ for all $i\in\mathbb{Z}_0^{N-k}$. 
\end{theorem}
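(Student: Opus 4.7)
I would decompose each actual output $y_{k+i}$ into the reference output $y^r_{k+i|k}=C_{k+i}\xref_{k+i|k}+D_{k+i}\uref_{k+i|k}$, which by validity lies in $\Yhc_{k+i|k}$, plus a perturbation equal to $(C_{k+i}+D_{k+i}K_{k+i|k})e_{k+i|k}$. The Morphological Opening Property \eqref{eq:PropPontryag4} applied to the definition \eqref{eq:TightenedOutputConstraintSets} would then collapse the sum back into $\Yc_{k+i}$. Two intermediate facts drive the argument: the nominal LTV trajectory under $\uh=\uref$ coincides with the reference, and the realized error satisfies $e_{k+i|k}\in\bigE_{k+i|k}$.

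\emph{First fact.} Assumption \ref{assump:OpenLoopControlComponent} fixes $\uh_{k+i|k}=\uref_{k+i|k}$, and evaluating \eqref{eq:NominalLTVDynamics} at $(\xref,\uref)$ using the definitions in \eqref{eq:linearizationMatrices} reproduces $f(\xref,\uref,\dref)$ exactly. Since the reference satisfies \eqref{eq:NonlinGenericDynamics} by Definition \ref{def:validRefTrajectory} and starts from the measured state $x_{k}=\xref_{k|k}=\xh_{k|k}$, a short induction on $i$ will yield $\xh_{k+i|k}=\xref_{k+i|k}$. Combined with \eqref{eq:MPCplusLQRControl} and \eqref{eq:TotalErrorDefinition}, this gives $y_{k+i}-y^r_{k+i|k}=(C_{k+i}+D_{k+i}K_{k+i|k})e_{k+i|k}$.

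\emph{Second fact.} I would induct on $i$ using \eqref{eq:ErrorDynamics}. The base case $e_{k|k}=\zerobd\in\bigE_{k|k}$ is immediate. Assuming $e_{k+l|k}\in\bigE_{k+l|k}$ for all $l\le i-1$, Assumption \ref{assump:knownBoundsDisturb} together with \eqref{eq:deltaz_definition} gives $\Dz_{k+l|k}\in\bigE_{k+l|k}\times K_{k+l|k}\bigE_{k+l|k}\times\Dc_{k+l}$. This Cartesian product is convex and contains the origin (the error sets are centered at $\zerobd$ by the construction in \eqref{eq:TotalErrorSetsEq}, and $\Dc_{k+l}$ is origin-centered by Assumption \ref{assump:knownBoundsDisturb}), so every $\alpha\Dz_{k+l|k}$ with $\alpha\in[0,1]$ stays inside, and hence $\Zc^{\xi}_{k+l|k}\subseteq\Zc_{k+l|k}$ via \eqref{eq:ExpressionSetZc}. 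Assumption \ref{assump:Zk_Lc_containment} then yields $\zeta(\xi_{k+l|k})\in\Lc(\Zc_{k+l|k})$; inserting these bounds together with $\Dd_{k+l}\in\Dc_{k+l}$ into \eqref{eq:ErrorDynamics} and comparing with \eqref{eq:TotalErrorSetsEq} produces $e_{k+i|k}\in\bigE_{k+i|k}$.

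Combining the two facts,
\begin{align*}
    y_{k+i}\in\Yhc_{k+i|k}\oplus(C_{k+i}+D_{k+i}K_{k+i|k})\bigE_{k+i|k}\subseteq\Yc_{k+i},
\end{align*}
with the last inclusion from \eqref{eq:PropPontryag4} applied to \eqref{eq:TightenedOutputConstraintSets}; the terminal index $i=N-k$ uses only the $C$-part since $D_{N}=0$. The hardest step I anticipate is verifying $\Zc^{\xi}_{k+l|k}\subseteq\Zc_{k+l|k}$: without both the origin-centeredness of $\bigE_{k+l|k}$ and $\Dc_{k+l}$ and the convexity of the Cartesian product, the Taylor intermediate point $\xi$ could leave the set on which $\Lc$ is evaluated, which would break the chain from realized error to its predicted overapproximation.
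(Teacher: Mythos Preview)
Your proposal is correct and follows essentially the same approach as the paper: decompose $y_{k+i}$ into the reference output plus $(C_{k+i}+D_{k+i}K_{k+i|k})e_{k+i|k}$, use $\xh_{k+i|k}=\xref_{k+i|k}$ when $\uh=\uref$, invoke $e_{k+i|k}\in\bigE_{k+i|k}$, and close with the opening property \eqref{eq:PropPontryag4}. The paper's proof is terser than yours---it treats the containment $e_{k+i|k}\in\bigE_{k+i|k}$ (and in particular the step $\Zc^{\xi}_{k+l|k}\subseteq\Zc_{k+l|k}$) as already established by the construction in Section~\ref{sec:ComputationErrorSets}, whereas you spell out the induction and the convexity/origin-centeredness argument explicitly---but the logical skeleton is identical.
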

\begin{proof}
Consider the definition of the system output at time step $k+i$
    \begin{align*}
            y_{k+i} &= C_{k+i}x_{k+i} + D_{k+i}u_{k+i}.
    \end{align*}

    Substituting $x_{k+i}$ and $u_{k+i}$ using \eqref{eq:TotalErrorDefinition} and \eqref{eq:MPCplusLQRControl}, one obtains
    \begin{align*}
            y_{k+i}&= C_{k+i}(\xh_{k+i|k} + e_{k+i|k}) + D_{k+i}( \uh_{k+i|k} + K_{k+i|k}e_{k+i|k}),\\
            &= \yh_{k+i|k} + (C + DK)_{k+i|k}e_{k+i|k}.
        \end{align*}

        Notice that, as $\uh_{k+i|k}= \uref_{k+i|k}$, the nominal LTV state trajectory is equal to the state reference trajectory, i.e., $\xh_{k+i|k}= \xref_{k+i|k}$, since there are no linearization errors on the reference trajectory. So, $\yh_{k+i|k} = \yref_{k+i|k}$ and, from Definition \ref{def:validRefTrajectory}, it results that $\yh_{k+i|k} \in \Yhc_{k+i|k}$. Therefore
        \begin{align*}
                y_{k+i} \in \Yhc_{k+i|k} \oplus (C + DK)_{k+i|k}\bigE_{k+i|k}.
        \end{align*}
        Using \eqref{eq:TightenedOutputConstraintSets},
        \begin{align*}
                y_{k+i} \in \Yc_{k+i} &\ominus (C + DK)_{k+i|k}\bigE_{k+i|k} \oplus  (C + DK)_{k+i|k}\bigE_{k+i|k}. \label{eq:proof}
        \end{align*}
        Applying the opening property from  \eqref{eq:PropPontryag4} results in $y_{k+i} \in \Yc_{k+i}$ and the proof is complete.
\end{proof}

This result shows that, if a valid reference trajectory can be found at time step $k=0$, then applying the LQR-based control law given by \eqref{eq:MPCplusLQRControl} guarantees robust constraint satisfaction for all future time steps. This control strategy though is not necessarily optimal. The MPC controller proposed in Section~\ref{sec:ProposedController} aims at achieving better performance while keeping the robustness guarantees from Theorem \ref{theo:feasibleImpliesYConstraintSatisfaction}.

\begin{remark} 
    Note that, when deriving the LQR control gains $K_{k+i|k}$, no assumption is made with respect to the stabilizability of the underlying LTV system, originating from a reference trajectory satisfying the nonlinear dynamics. However, the LQR control in \eqref{eq:MPCplusLQRControl} acts to minimize deviations of the actual trajectory from the reference trajectory. Therefore, the less effective the LQR control is in minimizing these errors, the larger the error sets $\bigE_{k+i|k}$ and the smaller the tightened constraint sets $\Yhc_{k+i|k} $. In the extreme case, the error sets may be so large that some of the tightened constraint sets become empty and no valid reference trajectory exists. This ultimately depends on the nonlinear dynamics of the specific system being considered and the weighting matrices of the LQR control design.
\end{remark}



\section{Proposed Robust NMPC} \label{sec:ProposedController}

This work proposes a novel shrinking-horizon nonlinear robust MPC with a fallback control component. The fallback control is used when no valid solution to the MPC optimization problem can be found within the controller's sampling time, while still guaranteeing robust constraint satisfaction. The proposed controller \textit{iteratively} solves the optimization problem

\RNMPCProblem:
\begin{subequations}
\begin{align} 
    	& \min_{\Xseq_k,\Useq_k} \ell(\Xseq_k,\Useq_k), \\
    	&\text{s.t.} \, \forall i \in \mathbb{Z}_0^{N-k-1} \nonumber \\
    	&\quad \quad x_{k+i+1|k} =f(x_{k+i|k},u_{k+i|k},\dref_{k+i}), \label{eq:NonlinearDynamicsConstraint_MPC} \\
        &\quad \quad x_{k|k} = x_{k}, \label{eq:MPC_IC_constraint}  \\
        &\quad \quad y_{k+i|k} =  C_{k+i}x_{k+i|k} + D_{k+i} u_{k+i|k}, \label{eq:OutputEquationConstraint_MPC}  \\
    	&\quad \quad y_{k+i|k} \in \Yhc_{k+i|k}, \; y_{N|k} \in \Yhc_{N|k}. \label{eq:OutputConstraint_MPC} 
\end{align}
\end{subequations}
Since $\dref_{k+i}$ represents the known or expected disturbance values at each time step, \eqref{eq:NonlinearDynamicsConstraint_MPC} refers to the \textit{nominal} nonlinear dynamics of the system. Equation \eqref{eq:OutputConstraint_MPC} enforces constraining the nominal outputs to the tightened output constraint sets $\Yhc_{k+i|k}$ as defined in \eqref{eq:TightenedOutputConstraintSets}. This implies that \RNMPCProblem{} is tied to a specific reference trajectory since the constraint tightening is obtained from a reference trajectory as noted in Section \ref{sec:Preliminaries}. This trajectory though is not necessarily a \textit{valid} reference trajectory, as will become clear in the remainder of this section.

\subsection{Obtaining valid optimal reference trajectories}
Given that the control strategy in \eqref{eq:MPCplusLQRControl} provides a robust control strategy that is suboptimal, the controller proposed in this work uses \RNMPCProblem{} to obtain an optimized and valid reference trajectory at each time step, as shown in \AlgOptimValidRefTrajectory{}. The algorithm starts with a given initial reference trajectory and iteratively goes through the process of solving \RNMPCProblem{} to obtain a new reference trajectory, computing the LQR controller gains and tightened output constraints, and verifying whether the new reference trajectory is valid with respect to the tightened output constraints. The algorithm terminates either when a new optimized valid reference trajectory is found or the maximum number of iterations is exceeded, in which case it is considered to have failed to find such an optimized trajectory.

Also, within the scope of this work, no claims are made with respect to the convergence properties of the sequence of reference trajectories obtained in \AlgOptimValidRefTrajectory{} or its ability to return a new valid reference trajectory. The availability of a fallback control input is therefore essential to guarantee closed-loop robust constraint satisfaction of the proposed controller as detailed in Section \ref{sec:MainControllerAlgorithm}.

\begin{algorithm*}[h]
	\caption{Valid Reference Trajectory Optimization}
     \label{alg:OptimValidRefTrajectory}
	\begin{algorithmic}[1]
		\Require $u_0,x_0,\Xref_0,\Uref_0, \Dc_{k},  \Yc_k$, \maxIters \Comment{$\Uref_0$ and $\Xref_0$ are initial reference trajectories}
		\Initialize {$\Xref \gets \Xref_0$, $\Uref \gets \Uref_0$, $\Success \gets \False $}	
		\State Compute constraint tightening $\Yhc_{k+i|k}$ using \AlgErrorSetsComp{} and \eqref{eq:TightenedOutputConstraintSets}    \Comment{$\Uref$ and $\Xref$ used to obtain $K_{k+i|k}$}
		\State $j=0$
		\While{$j<\maxIters$}
		\State Obtain new $\Uref$ and $\Xref$ by solving \RNMPCProblem{} \label{line:callNLPSolver}
		\State Use  new $\Uref$ and $\Xref$ to update constraint tightening $\Yhc_{k+i|k}$ using \AlgErrorSetsComp{} and \eqref{eq:TightenedOutputConstraintSets}    
		\If{$\Uref$ and $\Xref$ satisfy $\Yhc_{k+i|k}$}
		\State $\Success \gets \True $
		\State \textbf{break}
		\EndIf
		\State $j \gets j+1$
		\EndWhile
		\State \Return \Success, $\Uref$, $\Xref$
	\end{algorithmic}
\end{algorithm*}

\subsection{Main controller algorithm} \label{sec:MainControllerAlgorithm}

\AlgGenericRNMPC{} summarizes the proposed controller formulation. Note that, since the \AlgOptimValidRefTrajectory{} may not converge to a valid solution within the maximum number of steps in Line \ref{line:callNLPSolver}, the controller needs to provide a fallback control input as shown in Line \ref{line:AlgRNMPC_FallbackControl}. On the other hand, if \AlgOptimValidRefTrajectory{} converges to a valid solution at some time step $k_v$, the obtained solution is used to update the reference trajectory $\Zbd^r$. This reference trajectory is stored, along with the corresponding LQR matrices, for use in future time steps to compute the fallback control input. At any time step, the proposed controller guarantees robust constraint satisfaction even if \AlgOptimValidRefTrajectory{} never again finds a valid solution, using the fallback option for the remaining time steps. The fallback option though provides a suboptimal control response, therefore it is desirable that the algorithm finds a valid solution in as many time steps as possible in order to obtain optimized control inputs. 


\begin{algorithm*}[h]
	\caption{Robust NMPC}
	\label{alg:Generic_RNMPC}
	\begin{algorithmic}[1]
	\Require initial reference trajectory $\Zbd^r_0$, $N$, $Q$, $R$
        \State $\Zbd^r \gets \Zbd^r_0$ \Comment{sets current reference trajectory}
        \For{$k=0...N-1$}
            \State Measure the current state of the system, $x_{k}$
            \State Run \AlgOptimValidRefTrajectory{} to obtain a new valid reference trajectory \label{line:AlgRNMPC_OptimStep} 
            \If{\AlgOptimValidRefTrajectory{} returned \Success } 
                \State Update $\Zbd^r$ from the new valid reference trajectory returned by \AlgOptimValidRefTrajectory{}
                \label{line:AlgRNMPC_ZrefUpdate}
                \State Apply the control input $u_{k|k} = \uref_{k|k}$
                \State $k_{v} \gets k$ \Comment{update time step of last valid solution}
                \State Store $\Zbd^r$ and $K_{k+i|k_v}$, for $i \in \mathbb{Z}_0^{N-k_v-1}$ \Comment{for use in fallback control input}
            \Else 
                \State Apply the fallback control input $u_{k|k} = \uref_{k|k_{v}} + K_{k|k_{v}}e_{k|k_{v}}$. \label{line:AlgRNMPC_FallbackControl}
            \EndIf
        \EndFor
	\end{algorithmic}
\end{algorithm*}


Next, the robustness guarantees of the proposed controller are presented.
\begin{assumption}
     At time step $k=0$, an initial valid reference trajectory $\Zbd^r_0= (\Xref_0,\Uref_0,\Dref_0)$ is available. 
     \label{assump:FindFeasibleSolAt_k_zero}
\end{assumption}

\begin{theorem} \label{theo:SystemWithSLRNMPCSatisfiesYk}
    If Assumptions \ref{assump:FindFeasibleSolAt_k_zero}  and \ref{assump:knownBoundsDisturb} hold true, then system \eqref{eq:NonlinGenericDynamics}, with control inputs given by \AlgGenericRNMPC{}, robustly satisfies the original output constraints \eqref{eq:OrigOutputConstraints} at all time steps within the system operation, i.e.,  for $k \in \mathbb{Z}_0^{N}$. 
\end{theorem}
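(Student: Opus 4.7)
The plan is to argue by induction on the time step $k$, showing that at every $k \in \mathbb{Z}_0^{N}$ the output $y_k$ lies in $\Yc_k$ by invoking Theorem \ref{theo:feasibleImpliesYConstraintSatisfaction} at the most recent time step $k_v \le k$ at which \AlgOptimValidRefTrajectory{} returned \Success. By Assumption \ref{assump:FindFeasibleSolAt_k_zero} and the update rule in Line \ref{line:AlgRNMPC_ZrefUpdate} of \AlgGenericRNMPC{}, such a $k_v$ always exists (at worst $k_v=0$), and the trajectory $\Zbd^r$ together with the gains $K_{k+i|k_v}$ stored at $k_v$ are precisely what is needed to instantiate the hypotheses of Theorem \ref{theo:feasibleImpliesYConstraintSatisfaction}.

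For the base case, Assumption \ref{assump:FindFeasibleSolAt_k_zero} provides a valid reference $\Zbd^r_0$ at $k=0$, so I would set $k_v = 0$; Theorem \ref{theo:feasibleImpliesYConstraintSatisfaction} then certifies $y_0 \in \Yc_0$ as soon as the LQR-style control \eqref{eq:MPCplusLQRControl} with $\uh = \uref$ is applied, and since $e_{0|0} = 0$ the applied input $u_{0|0} = \uref_{0|0}$ matches this requirement.

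For the inductive step at time step $k$, \AlgGenericRNMPC{} branches in two ways. If \AlgOptimValidRefTrajectory{} returns \Success, the stored reference is overwritten with a freshly computed valid trajectory and $k_v$ is reset to $k$; the applied input $u_{k|k} = \uref_{k|k}$ again equals the LQR law \eqref{eq:MPCplusLQRControl} at $i=0$ because $e_{k|k}=0$, so Theorem \ref{theo:feasibleImpliesYConstraintSatisfaction} invoked at $k$ with the new reference yields $y_k \in \Yc_k$. If instead the fallback branch is executed (Line \ref{line:AlgRNMPC_FallbackControl}), the input $u_{k|k} = \uref_{k|k_v} + K_{k|k_v}\, e_{k|k_v}$ is exactly the form of \eqref{eq:MPCplusLQRControl} associated with the reference stored at $k_v$, and by construction every intermediate step in $\mathbb{Z}_{k_v+1}^{k-1}$ was likewise a fallback driven by the same reference (otherwise $k_v$ would have been updated). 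Hence the closed-loop trajectory between $k_v$ and $k$ is exactly the one analyzed in the proof of Theorem \ref{theo:feasibleImpliesYConstraintSatisfaction}, so the theorem applied at $k_v$ directly delivers $y_k \in \Yc_k$.

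The hard part, as I see it, is making rigorous the claim that the freshly measured state $x_k$ at a fallback step is consistent with the nominal-plus-error decomposition underlying \eqref{eq:ErrorDynamics} and the error sets $\bigE_{k|k_v}$. The resolution is the observation above: because no intermediate success occurred between $k_v$ and $k$, the actual closed-loop dynamics are driven solely by the LQR law with respect to the $k_v$ reference and by the disturbance realization $\Dd$, which is the precise setting in which Theorem \ref{theo:feasibleImpliesYConstraintSatisfaction} guarantees $y_{k_v+i} \in \Yc_{k_v+i}$ for all $i \in \mathbb{Z}_0^{N-k_v}$. Since $k$ was arbitrary in $\mathbb{Z}_0^{N}$, this completes the argument.
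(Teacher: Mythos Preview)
Your proof is correct and follows essentially the same two-case structure as the paper's own argument: at each $k$, either \AlgOptimValidRefTrajectory{} succeeds and Theorem~\ref{theo:feasibleImpliesYConstraintSatisfaction} is invoked with the fresh reference, or the fallback branch is taken and Theorem~\ref{theo:feasibleImpliesYConstraintSatisfaction} is invoked at the most recent successful $k_v$, whose existence is guaranteed by Assumption~\ref{assump:FindFeasibleSolAt_k_zero}. Your explicit observation that every intermediate step in $\mathbb{Z}_{k_v}^{k-1}$ also used the LQR law \eqref{eq:MPCplusLQRControl} with respect to the $k_v$ reference---so that the hypotheses of Theorem~\ref{theo:feasibleImpliesYConstraintSatisfaction} indeed hold over the whole interval from $k_v$ to $k$---is a useful clarification that the paper's proof leaves implicit.
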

\begin{proof}
    If Assumption \ref{assump:FindFeasibleSolAt_k_zero} is satisfied, then, at time step $k=0$ a feasible solution is available such that the current output $y_k$ satisfies \eqref{eq:OrigOutputConstraints}, with input $\uref_{k|k}$ being applied to the system. By \AlgGenericRNMPC{}, for each future time step $k \in \mathbb{Z}_1^{N-1}$ there are two distinct possibilities.
    \begin{enumerate}
         \item \textbf{No success in \AlgOptimValidRefTrajectory{}}: if \AlgOptimValidRefTrajectory{} does not converge to a valid solution at time step $k$ but converged at some $k_v<k$, then the control input $u_k = \uref_{k|k_v} + K_{k|k_v}e_{k|k_v}$ will be applied to the system and Theorem \ref{theo:feasibleImpliesYConstraintSatisfaction} guarantees constraint satisfaction according to \eqref{eq:OrigOutputConstraints}. The existence of a $k_v<k$ such that a valid reference trajectory is available is guaranteed by Assumption \ref{assump:FindFeasibleSolAt_k_zero}.
         \item \textbf{Success in \AlgOptimValidRefTrajectory{}}: if \AlgOptimValidRefTrajectory{} converges to a valid solution at time step $k$ then the input $\uref_{k|k}$ will be applied to the system and by Theorem \ref{theo:feasibleImpliesYConstraintSatisfaction} the current output $y_k$ satisfies \eqref{eq:OrigOutputConstraints} and there is a feasible input trajectory for all future time steps such that \eqref{eq:OrigOutputConstraints} is satisfied as well.
    \end{enumerate}
\end{proof}

\begin{remark}\label{rem:InitialRefTrajectory}
    Assuming a valid initial trajectory, as in Assumption \ref{assump:FindFeasibleSolAt_k_zero}, is standard in the robust NMPC literature. One way to find a valid initial reference trajectory for the proposed controller is to run \AlgOptimValidRefTrajectory{} offline with the initial conditions and replace the objective function in \RNMPCProblem{} with a constant, i.e., solving only a feasibility problem and not an optimization problem.
\end{remark}

\begin{remark}
    The robust constraint satisfaction guarantees, as provided in Theorems \ref{theo:feasibleImpliesYConstraintSatisfaction} and \ref{theo:SystemWithSLRNMPCSatisfiesYk}, are sufficient to ensure bounded-input, bounded-output (BIBO) stability. As noted in Koeln et al\cite{koelnVerticalHierarchicalMPC2020}, for many applications BIBO is not only sufficient but also preferred over asymptotic stability since it allows the controller to use the system dynamics to optimize system operation. 
\end{remark}



\section{Considerations for practical implementation} \label{sec:ConsiderationsPracticalImplementation}
This section provides details on some of the aspects of the practical implementation of the proposed controller that are relevant to achieve efficient computation. The measures defined here have been adopted to obtain the results shown in Section \ref{sec:NumericalExamples}.

\subsection{Set representations and overapproximations}
Depending on the set representation and specific algorithms used to compute the Minkowski sum and Pontryagin difference operations in Algorithm \ref{alg:ErrorSetsComputation} and the constraint tightening in \RNMPCProblem{}, the underlying numerical computations can be very inefficient. Practical implementations may require trading off accuracy for speed by using overapproximations of some of the computed sets. In this work, sets are represented as zonotopes and sometimes approximated to intervals. Zonotopes are centrally symmetric sets that can be efficiently represented in the so-called G-Rep \cite{mcmullenZonotopes1971} as $\Zc = \{c,G\}$, where $c \in \mathbb{R}^{n}$ is the center of the set and $G = [g_1 \cdots g_{n_g}]$ is a generator matrix whose columns $g_i \in \mathbb{R}^{n} $ are the generators of the zonotope and $n_g$ is the number of generators. Intervals provide one of the simplest set representations, which allow for very efficient computations although often providing more conservative approximations compared to other representations. An interval of dimension $n$ is a subset of $\mathbb{R}^n$ that can be specified by a lower bound vector $\undbar{x}$ and an upper bound vector $\Bar{x}$ such that \cite{althoffCORA2022Manual2022}

\begin{align*}
    \Ic = [\undbar{x},\Bar{x}]= \left\{x \mid \undbar{x}_i \le x_i \le \Bar{x}_i, \forall i \in \mathbb{Z}_1^{n} \right\}.
\end{align*}
Note that intervals can also be efficiently represented as zonotopes where the generator matrix is simply a diagonal matrix. 

The following measures are applied to improve the efficiency of the computations in this work.
\begin{itemize}
    \item \textbf{Zonotope interval hull overapproximation}: as discussed in Section \ref{sec:ComputationErrorSets}, multiple Minkowski sums are required for the computation of the error sets. Since in G-Rep Minkowski sums are executed by concatenating the generator matrices of the operands, with multiple consecutive operations the number of generators of the resulting sets rapidly grows. To keep the set complexity low, zonotopes can be overapproximated by converting them to intervals, which reduces the number of generators to the dimension of the set. This overapproximation of a zonotope is computed by essentially finding the interval hull of the zonotope. Given a zonotope $\Zc = \{c,G\}$, its interval hull is given by
    \begin{align*}
        \IH({\Zc}) = \left\{c, \diag\left(\sum_{i=1}^{n_g} |g_i|\right) \right\}.
    \end{align*}
    This conversion is applied to the following parts of the control algorithm: after computing each $\bigE_{k+i|k}$ in Lines \ref{line:AlgEsets_computebigE} and \ref{line:AlgEsets_computebigE_N} of \AlgErrorSetsComp{}, in Line \ref{line:AlgLagRemOA_IH_Zc} of \AlgLagRemOverapprox, and to the sets involved in the Pontryagin difference operation when computing the sets $\Yhc_{k+i|k}$ in \AlgOptimValidRefTrajectory{}.
    
    \item \textbf{Simplified Pontryagin difference}: although performing set operations such as Minkowski additions and linear mappings on zonotopes is very computationally efficient, the same is not true for the exact computation of the Pontryagin difference and it is often necessary to resort to overapproximations \cite{yangEfficientBackwardReachability2022}. However, when performed on intervals where the subtrahend is centered at the origin, the Pontryagin difference operations can be computed exactly and very efficiently. Given two intervals $\Ic_a = [\undbar{x}_a, \Bar{x}_a]$ and $\Ic_b= [-\Bar{x}_b, \Bar{x}_b]$, with $\Ic_b$ centered at the origin, their Pontryagin difference can be computed by
    \begin{align*}
        \Ic_a \ominus \Ic_b &= [\undbar{x}_a + \Bar{x}_b, \Bar{x}_a - \Bar{x}_b].
    \end{align*}
    This efficient way of computing the Pontryagin difference is applied when computing the sets $\Yhc_{k+i|k}$ in \AlgOptimValidRefTrajectory{}. Note that the Pontryagin difference in \AlgOptimValidRefTrajectory{} is only used to subtract sets centered at the origin. This is true due to i) disturbance sets $\Dc_k$ centered at the origin, ii) the fact that every error set $\bigE_{k+i|k}$ is overapproximated using intervals, and iii) the way the Lagrange remainders are bounded, as shown in Section \eqref{sec:BoundingLagrangeRemainders}, which produces sets centered at the origin. 
\end{itemize}

\subsection{Bounding Lagrange remainders \label{sec:BoundingLagrangeRemainders}}

The $i^{th}$ component of the Lagrange remainder of the Taylor expansion in \eqref{eq:LTVGenericDynamics} is given exactly by \cite{althoffReachabilityAnalysisNonlinear2008}
\begin{align*}
    \zeta_i(\xi_{k}) = \frac{1}{2}(z_{k} - \zref_{k})^\top J_i(\xi_{k})  (z_{k} - \zref_{k}),
\end{align*}
with
\begin{align*}
    J_i(\xi_{k}) = \deldel{^2f_i(\xi_{k})} {z^2}.
\end{align*}

There are multiple methods of bounding linearization errors due to the Lagrange remainder in the Taylor expansion of \eqref{eq:NonlinGenericDynamics}. The CORA toolbox \cite{althoffCORA2022Manual2022} provides functionality to compute $\Lc(\Zc_{k+i|k})$ using different techniques. In this work though, the approach used by Althoff et al\cite[Section V]{althoffReachabilityAnalysisNonlinear2008} is applied, which is summarized in \AlgLagRemOverapprox.

Line \ref{line:AlgLagRemOA_gamma} assumes that the center $c$ of $\Zc_{k+l|k}$ coincides with $\zref_{k+l|k}$, which, considering \eqref{eq:ExpressionSetZc}, is true, as long as the sets $\Dc_k$ have centers at the origin. In Line \ref{line:AlgLagRemOA_MaxHessian}, the $\max$ and absolute value operations are performed elementwise and simple interval arithmetic is used to find the maximum absolute values of each component, which is also provided in CORA. In Line \ref{line:AlgLagRemOA_Lci}, the $i^{th}$ component of the Lagrange remainder set is overapproximated by an interval. Finally, in Line \ref{line:AlgLagRemOA_LcComposition} all the subintervals are combined to form $\Lc(\Zc_{k+l|k})$.

\begin{algorithm}[h]
	\caption{Overapproximation of the Lagrange remainder}
	\label{alg:LagRemOverapprox}
	\begin{algorithmic}[1]
	\Require $\Zc_{k+l|k} = \{c,G\}, G = [g_1 \cdots g_{n_g}]$
        \State $\gamma = \sum_{k=1}^{n_g}|g_k|$ \label{line:AlgLagRemOA_gamma}
		\For{$i=1...n_x$}
            \State $\Ic_{\Zc} = \IH(\Zc_{k+l|k})$ \label{line:AlgLagRemOA_IH_Zc}
			\State $H_i \gets \max\left(|J_i(z)|\right), z \in \Ic_{\Zc}$  \label{line:AlgLagRemOA_MaxHessian}
            \State $M_i \gets \frac{1}{2}\gamma^\top H_i \gamma$
            \State $\Lc_i \gets [-M_i, M_i]$ \label{line:AlgLagRemOA_Lci}
		\EndFor
        \State $\Lc({\Zc_{k+l|k}}) = \Lc_1 \times \cdots \times \Lc_{n_x}$
		\State \Return $\Lc({\Zc_{k+l|k}})$  \label{line:AlgLagRemOA_LcComposition}
	\end{algorithmic}
\end{algorithm}



 \section{Numerical Experiments} \label{sec:NumericalExamples}
A series of simulation experiments have been performed to highlight the validity of the proposed approach as well as some of the trade-offs associated with different choices when implementing the proposed controller.

\subsection{Plant model}


\begin{figure}[t]
    \centering
    \includegraphics[trim={0cm 0cm 5cm 0},clip,width=9.4cm] {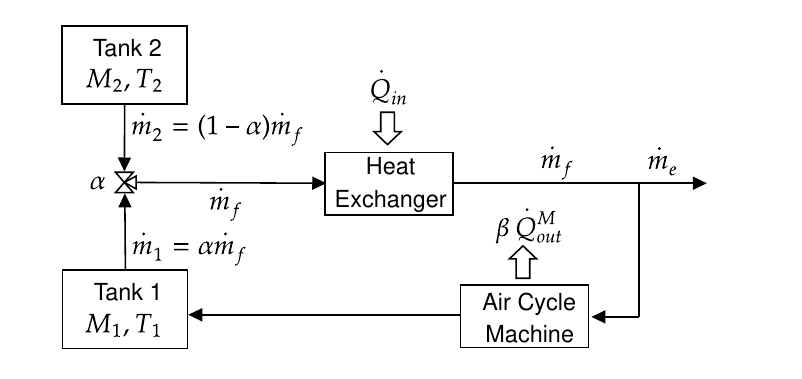}
    \caption{Dual-tank FTMS architecture modified with Air Cycle Machine.\hspace{9cm}}
    \label{fig:plantDiagram}
\end{figure}


Simulations have been conducted using an aircraft Fuel Thermal Management System (FTMS) model, depicted in Fig.~\ref{fig:plantDiagram}. This model is derived from the simplified model used by Leister and Koeln\cite{leisterNonlinearHierarchicalMPC2020} with the ram air cooler replaced by an Air Cycle Machine (ACM) to provide additional cooling power. The model can be derived from first principles, resulting in the continuous-time dynamics
\begin{subequations}\label{eq:FTMSdynamics}
\begin{align}
    \dot{M}_1 &= (1-\alpha)\dot{m}_f - \dot{m}_e, \label{eq:M1dot}\\
    \dot{M}_2 &= -(1-\alpha)\dot{m}_f,  \label{eq:M2dot} \\
    \dot{T}_1 & = \frac{(\dot{m}_f - \dot{m}_e)}{M_1} \left[ (1-\alpha)(T_2 - T_1) + \frac{	\dot{Q}_{in} }{c_v\dot{m}_f}\right] - \beta\frac{\dot{Q}_{out}^{M}}{c_vM_1}.\label{eq:T1dot}
\end{align}
\end{subequations}
The model states $x = \begin{bmatrix} M_1 & M_2 & T_1 \end{bmatrix}^\top$ correspond to the fuel mass in Tanks 1 and 2 and the fuel temperature in Tank 1, respectively. The control inputs $u= \begin{bmatrix} \alpha & \beta \end{bmatrix}^\top$ are the split-valve opening and the relative ACM load, respectively. Note that $\dot{Q}_{out}^{M}$ is the maximum ACM load such that the term $\beta\dot{Q}_{out}^{M}$ represents the amount of cooling provided by the ACM at any point in time. The only disturbance in the model is $\dot{Q}_{h_v}$, which is the only time-varying component of the total FTMS heat load $\dot{Q}_{in}$, which in turn is given by
\begin{equation}\label{eq:Qin}
    \dot{Q}_{in} = \dot{Q}_F + \dot{Q}_{h_v} + \dot{Q}_{h_e} + \left( P_p + K_{Q_h} \dot{m}_f \right).
\end{equation}
All the other terms in \eqref{eq:Qin} are considered constant. The reader is referred to the work by Leister and Koeln\cite{leisterNonlinearHierarchicalMPC2020} for more details on the physical interpretation of the remaining parameters in this model. Table~\ref{tab:ModelParams} summarizes the values used for each model parameter.

\begin{table}[htb]
\scriptsize
\caption{\uppercase{FTMS model parameters.}\hspace{15cm}}
\begin{center}
\label{tab:ModelParams}
\scalebox{1.00}{
\begin{tabular}{p{0.6cm}<{\centering} p{3cm}<{\centering} p{0.7cm}<{\centering} p{0.7cm}<{\centering} p{0.4cm}<{\centering} p{0.4cm}<{\centering}}
\hline
Variable & Description & Units & Nominal Value* & Lower Bound & Upper Bound \\ \hline
$M_1$ & Recirculation tank mass              &  kg   & 200   & 50        & 2850 \\
$M_2$ & Reservoir tank mass                  & kg    & 2850  & 50         & 2850 \\
$T_1$ & Recirculation fuel temperature       &  K    & 288   & 250       & 333 \\
$T_2$ & Reservoir fuel temperature            &  K    & 288   &-       &-\\
$\dot{m}_f$ & Pumped fuel flow rate               &  kg/s   & 1.0    &-          & -     \\
$\dot{m}_e$ & Engine fuel flow rate               & kg/s    & 0.26  & -         & -     \\
$\alpha$ & Recirculation fuel fraction            & -     & -     & 0         & 1.0   \\
$\beta$ & Relative ACM load                     & -     & -     & 0         & 1.0   \\
$c_v$ & Fuel specific heat                        & J/(kg$\cdot$K)&2,010      &-  &-\\
$\dot{Q}_F$ & FADEC heat input                    &W     &1,000    &-          &-      \\
$\dot{Q}_{h_v}$ & VCS heat input             &W     &55,000    &-          &-      \\
$\dot{Q}_{h_e}$ & Engine heat input               &W     &10,000    &-          &-      \\
$\dot{Q}_{out}^{M}$ & Maximum ACM load              &W     &120,000    &-          &-      \\
$P_p$ & Fuel pump power                            &W     &50,000    &-          &-      \\
$K_{Q_h}$ & Fuel pump heat input coeff.            &W/kg  &-6,618    &-          &-      \\
\multicolumn{6}{c}{*Value also serves as the initial state for $ M_1 $, $ M_2 $, and $ T_1 $.}
\end{tabular}}
\end{center}
\end{table}


The discrete-time nonlinear model used in the simulations from this section is obtained by simple forward Euler discretization of the continuous-time dynamics \eqref{eq:FTMSdynamics}.

\subsection{Controller implementations} \label{sec:ControllerImplementations}
Since the controller proposed in \AlgGenericRNMPC{} requires the iterative solution of a NLP, that is \RNMPCProblem{}, the use of different solvers results in different controller implementations. All simulations were performed in Matlab using CasADi \cite{anderssonCasADiSoftwareFramework2019} to formulate the optimization problems. In this work, three different implementations are compared. The first two use IPOPT \cite{byrdInteriorPointAlgorithm1999} to solve \RNMPCProblem{} and are named \IPOPTOne{} and \IPOPTTwo{}. The difference between both lies in how equation \eqref{eq:T1dot} is entered in the optimization problem formulation. Note that, once the Euler discretization is applied, the equality constraints related to \eqref{eq:T1dot} have bilinear terms on the right-hand side divided by the decision variables related to $M_1$. This original formulation is used by \IPOPTTwo{}. In \IPOPTOne{} both sides of these equality constraints are multiplied by the corresponding discretized decision variables related to $M_1$. This results in equality constraints that have only bilinear terms on both sides of the equality. As will be seen in the remainder of this section, even though both approaches are mathematically equivalent, the reformulation in \IPOPTOne{} improves the performance of IPOPT. The third implementation of the controller, \SLRNMPC{}, uses a custom NLP solver based on Successive Linearization (SL), very similar to the SL implementation described by Leister and Koeln\cite{leisterNonlinearHierarchicalMPC2022} and originally presented by Mao et al\cite{maoSuccessiveConvexificationNonconvex2016}. The reader is referred to the work of Leister and Koeln\cite{leisterNonlinearHierarchicalMPC2022} and references therein for more details on the SL algorithm. As with \IPOPTTwo{}, the \SLRNMPC{} uses the original formulation of the dynamics equations. The SL algorithm internally converts \RNMPCProblem{} to a series of Quadratic Programs (QPs) and Gurobi \cite{gurobioptimizationllcGurobiOptimizerReference2021} is used to solve these QPs. Computations were performed on a desktop computer with a \mbox{3.2 GHz} \mbox{i7-8700} processor and \mbox{16 GB} of RAM.

The initial valid reference trajectory for the controllers is computed offline using the strategy suggested in Remark \ref{rem:InitialRefTrajectory}. Also, the following parameters are used to compute the LQR controller matrices: $ Q = diag\left(\begin{bmatrix}  1/500 & 1/100 & 40/300   \end{bmatrix}\right)$  and $ R = diag\left(\begin{bmatrix} 1 & 0.01 \end{bmatrix}\right)$. The parameter \texttt{maxIters} in \AlgOptimValidRefTrajectory{} is set to 20 for all controllers. The output vectors here are defined to be simply $y_k = \rvect{x_k^\top \; u_k^\top}^\top$ and matrices $C_k$ and $D_k$ are defined accordingly. The output constraints $\Yc_k$ are defined as constant intervals matching, for all time steps, the state and input constraints defined in Table \ref{tab:ModelParams}.

The objective function used in \RNMPCProblem{} is
\begin{align*} 
    \ell(\Xseq_k,\Useq_k) = \sum_{i=0}^{N-k-1}  \Delta u_{k+i|k}^\top\Delta u_{k+i|k} + w_{\beta} \beta_{k+i|k}^2,
\end{align*}
where  $w_{\beta}=5$ is a weighting factor and
\begin{align*} 
    \Delta u_{k+i|k} =  (u_{k+i|k} - u_{k+i-1|k}).
\end{align*}
For $i=0$, the term $u_{k+i-1|k}$ refers to the input vector at the previous time step. This objective function is designed mainly to incentivize minimum energy use by penalizing the use of the ACM input $\beta$ and also to encourage smoothness in the control inputs by penalizing the difference between the input vectors at subsequent time steps.  

Finally, the controllers implemented here are tested with two different sampling times: $T_s=50$ seconds and $T_s=100$ seconds. Since all simulations are performed with a fixed final time of 10,000 seconds, the underlying optimization problems have about twice as many decision variables when $T_s=50$ seconds. This allows some evaluation of how well these controllers scale with the optimization problem size. However, to allow fair comparisons of the performance of controllers with different sampling times, the following equalized objective function is used
\begin{align*} 
    \ell_{eq}(\Xseq,\Useq) = T_s \cdot \ell(\Xseq,\Useq).
\end{align*}
This equalized objective function is not implemented in the controllers but is only used for comparisons in Section \ref{sec:PerformanceComparison}. These sampling times are also used to obtain the discrete-time models from the forward Euler discretization of the continuous-time model.

\subsection{Test cases}
Recall that the controller formulation in Section \ref{sec:ProposedController} considers that the actual disturbance $d_k$ hitting the system has two components: a reference disturbance $\dref_k$, which can be seen as the predicted disturbance, and an unknown disturbance deviation $\Dd_k$ bounded by known sets $\Dc_k$. All test cases simulated in this work consider a fixed interval $\Dc_k= [-27500,+27500]$ W, which is 50\% variation over the nominal disturbance variable value, $\dot{Q}_{h_v}$. Also, all test cases consider the same reference trajectory $\Dref_0$ shown in Fig. \ref{fig:Disturbances}. The test cases consider two specific realizations of $\Dd_k$: $\Dd_k$ as a square wave with amplitude given by the extremes of $\Dc_k$ and $\Dd_k$ as a random signal with uniform distribution and maximum amplitude given by $\Dc_k$. 


\begin{figure}[ht]
    \centering
    \includegraphics[width=0.5\linewidth] {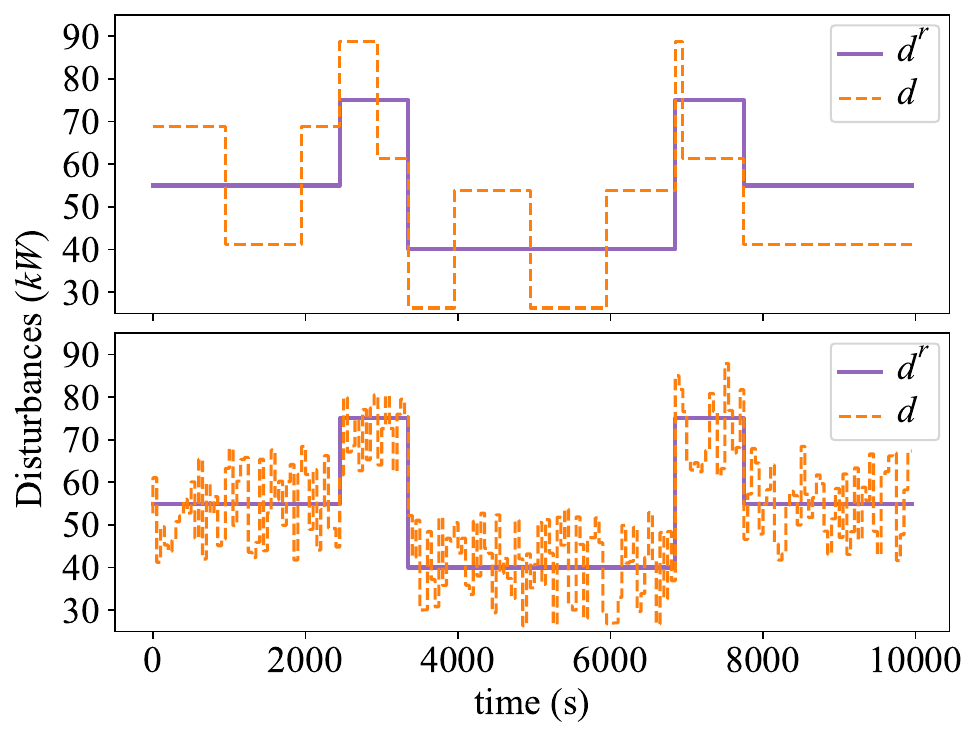}
    \caption{Disturbance profiles used in the simulations. Top: $\Dd_k$ is a square wave. Bottom: $\Dd_k$ is a random signal.\hspace{9cm}}
    \label{fig:Disturbances}
\end{figure}


The combination of these disturbance scenarios with the different sampling times mentioned in Section \ref{sec:ControllerImplementations} results in all test cases evaluated in this work. The specific case when the disturbance from the top of Fig. \ref{fig:Disturbances} is applied and $T_s=100$ seconds will be further referred to as \MainTestCase{} in order to highlight more detailed results in the remainder of this section.

To illustrate the need for a robust controller in these scenarios, Fig. \ref{fig:MainTestCase_NMPCFail} shows the resulting trajectory when a nominal (non-robust) NMPC controller is applied in closed-loop for \MainTestCase{}. This controller simply solves \RNMPCProblem{} once at each time step using IPOPT. However, the tightened constraints $\Yhc_k$  in \eqref{eq:OutputConstraint_MPC} are replaced by the original constraints $\Yc_k$, making it effectively a nominal NMPC. As seen in the figure, temperature constraint violations occur as early as $t=400$ seconds.


\begin{figure}[ht]
    \centering
    \includegraphics[width=0.5\linewidth] {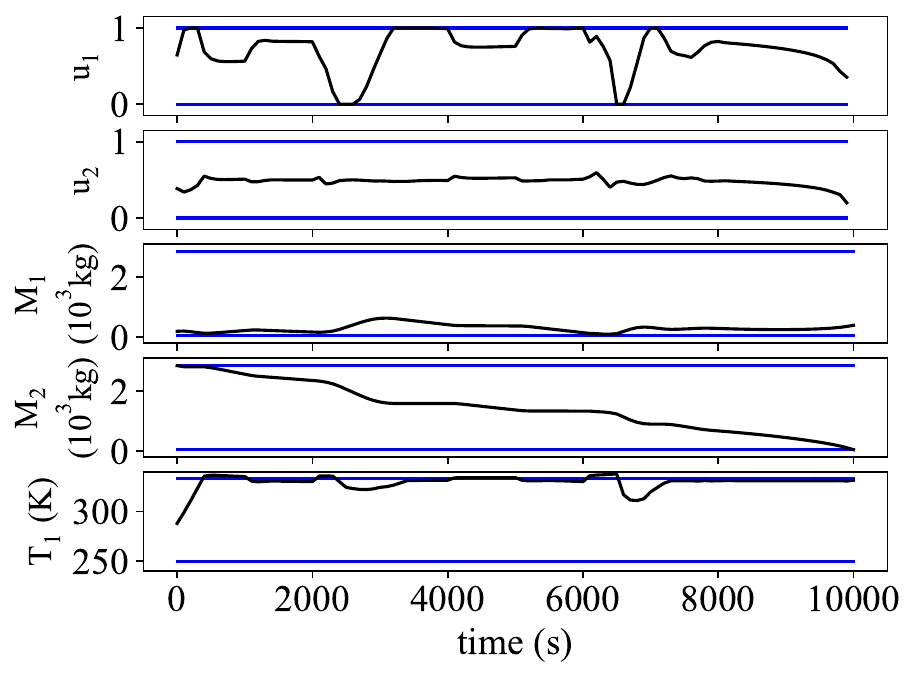}
    \caption{State and input trajectories obtained when a nominal NMPC is applied to \MainTestCase{}, with multiple temperature constraint violations. Original constraints are shown in blue.}
    \label{fig:MainTestCase_NMPCFail}
\end{figure}


\subsection{Error sets and fallback control}
Fig. \ref{fig:MainTestCase_ErrorSetsAtt0} shows an example of the error sets $\bigE_{k+i|k}$ computed for \MainTestCase{} at $k=0$, i.e., based on the initial reference trajectory. Since, as detailed in Section \ref{sec:ConsiderationsPracticalImplementation}, these error sets are approximated by intervals, they are represented in Fig. \ref{fig:MainTestCase_ErrorSetsAtt0} by their extreme values in each dimension, shown in red. Fig.~\ref{fig:MainTestCase_ErrorSetsAtt0} also shows the actual error $e_k$ obtained for 100 random realizations of $\Dd_k$ and with the FTMS system controlled using only the fallback control input \eqref{eq:MPCplusLQRControl} based on the same initial reference trajectory used to compute these error sets. This example highlights how the LQR control component in \eqref{eq:MPCplusLQRControl} can effectively limit the difference between the reference and actual states as well as how the computed error sets effectively bound these possible error trajectories. Note that two of the error trajectories (in black) actually align with the error set boundaries: these trajectories refer to two additional realizations of $\Dd_k$, namely when the disturbance is constantly at the upper or lower extremes of $\Dc_k$ respectively. This shows that, specifically for this application example, despite the overapproximations in the computation of the error sets, the sets obtained are not conservative, since there are disturbance realizations that bring the error trajectory very close to the error set boundaries.


\begin{figure}[ht]
    \centering
    \includegraphics[width=0.5\columnwidth] {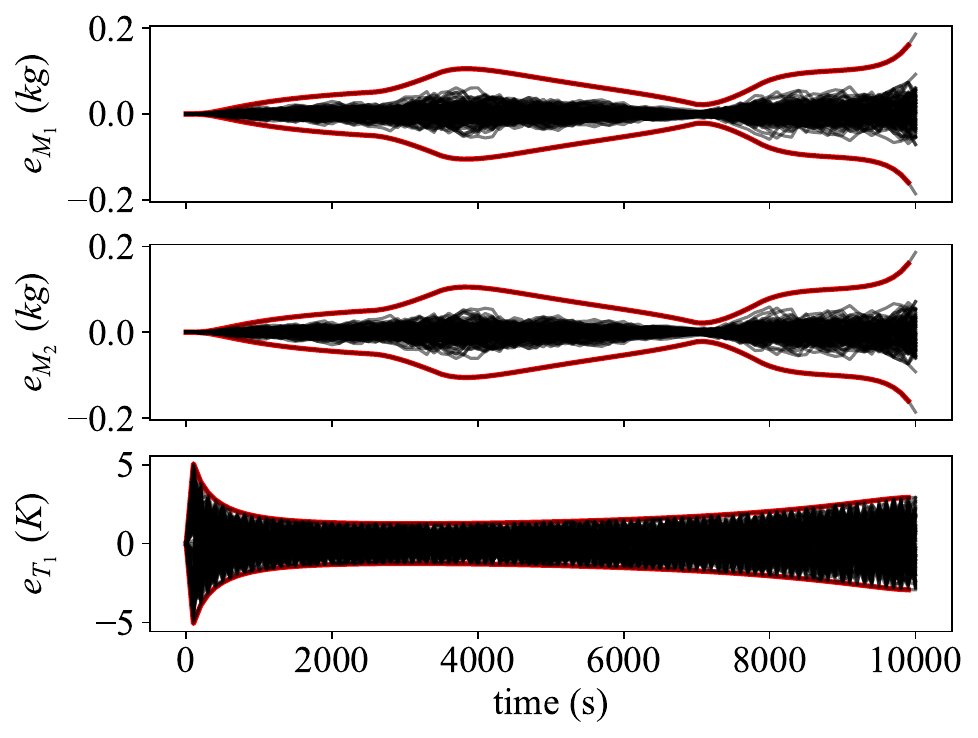}
    \caption{Error sets $\bigE_{k+i|k}$ computed for $k=0$ (red) and the actual error $e_k$ (black) obtained for 100 random realizations of $\Dd_k$ and with the FTMS system controlled using only the fallback control input \eqref{eq:MPCplusLQRControl} based on the same reference trajectory used to compute these error sets.}
    \label{fig:MainTestCase_ErrorSetsAtt0}
\end{figure}


\subsubsection{Effect of the fallback control on closed-loop trajectories and performance}
Fig. \ref{fig:SwitchingControl_MainTestCase_XUtrajectories} shows, for \MainTestCase{} and using \IPOPTOne{}, the effect of the use of the fallback control through three different situations: i) when no fallback control is used, ii) when the fallback control is alternately applied and not applied for periods of 500 seconds, and iii) when only the fallback control is used. It can be seen that, when only the fallback control is used, substantially different trajectories are obtained compared to the case with no fallback control. This highlights the fact that the \IPOPTOne{} is effectively optimizing the planned trajectory compared to the initial feasible trajectory. In fact, the normalized and equalized objective function values obtained are 1.002 with no fallback control, 1.008 in the alternating case, and 1.520 with fallback control only. Similarly to Fig. \ref{fig:AllTestCases_ObjFunctionValues}, the objective function value for \IPOPTTwo{} in Test Case 1 was used for normalization.

The input trajectories for the case when the fallback option is alternately applied highlight one potential effect of using the fallback control: the transitions when going from the fallback to the optimized control option may be considerably abrupt since the newly obtained optimized solution may be quite different from the current fallback option. These transitions can result in more oscillatory behavior in the state trajectories, although in the present example these are considerably low due to the high damping on the system dynamics. It should be noted that the transitions from the optimized to the fallback control option are in general not abrupt, since at these time steps the fallback control will be based on an optimized trajectory that was obtained just at the previous time step.

\begin{figure}[ht] 
        \centering
        \includegraphics[width=0.5\columnwidth]{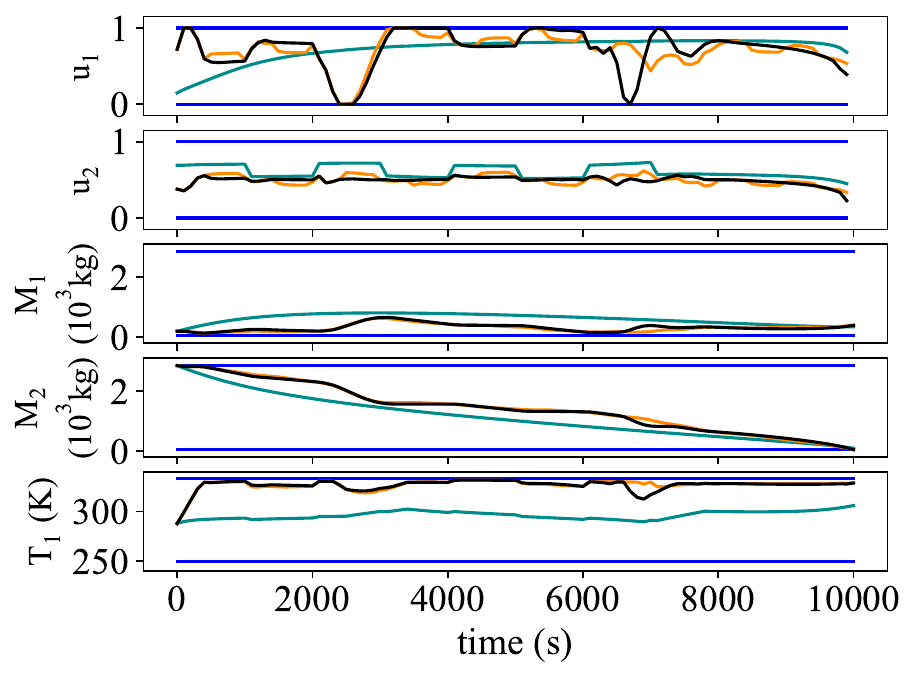}
    \caption{State and input trajectories obtained when applying \IPOPTOne{} (black), \IPOPTOne{} with the fallback control applied alternately for periods of 500 seconds (orange), and fallback control only (dark green) to \MainTestCase{}. Original constraints are shown in blue.}
\label{fig:SwitchingControl_MainTestCase_XUtrajectories}
\end{figure}

\subsection{Closed-loop trajectories}
Fig. \ref{fig:MainTestCase_AllControllers_XUtrajectories} shows the closed-loop state and input trajectories obtained when using each of the three controller implementations for \MainTestCase{}. \IPOPTOne{} and \IPOPTTwo{} have very similar trajectories since they are essentially using the same solver. The \SLRNMPC{} however presents some clear differences in the trajectory obtained, which is explained by the fact that a different solver is being used and each solver may eventually converge to different local minima at the very first time step when \RNMPCProblem{} is solved in \AlgGenericRNMPC{}, leading to different closed-loop trajectories. It is also clear from Fig. \ref{fig:MainTestCase_AllControllers_XUtrajectories} that all three controllers are capable of ensuring robust constraint satisfaction throughout the entire simulation.

To illustrate the amount of spread on the closed-loop trajectories when different realizations of the disturbance are applied, Fig. \ref{fig:RandomTestCase_XUtrajectories} shows the corresponding trajectories obtained for 100 different random realizations of $\Dd_k$. The trajectories for \IPOPTTwo{} are omitted for conciseness. Interestingly, \IPOPTOne{} seems to be less consistent in the solutions obtained. Specifically, when observing the behavior of the $\alpha$ input, there is more variation in the input profile compared to the $\alpha$ inputs from \SLRNMPC{}. As expected though, robustness is achieved for all simulated realizations of the disturbance for both controllers.

\begin{figure}[ht] 
        \centering
        \includegraphics[width=\textwidth]{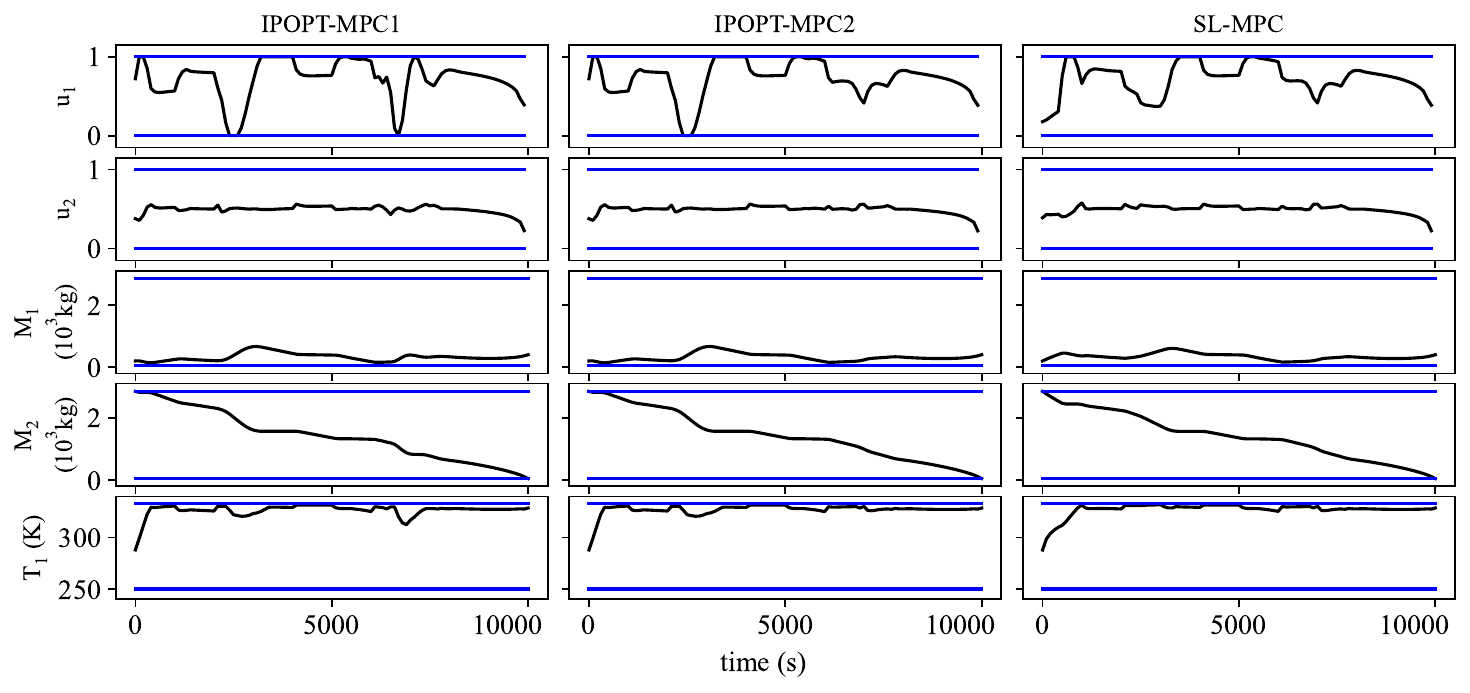}
    \caption{State and input trajectories obtained when applying the three robust NMPC controllers to \MainTestCase{}. Original constraints are shown in blue.}
\label{fig:MainTestCase_AllControllers_XUtrajectories}
\end{figure}
\begin{figure}[ht] 
    \begin{subfigure}[b]{0.5\linewidth}
        \centering
        \includegraphics[width=\textwidth]{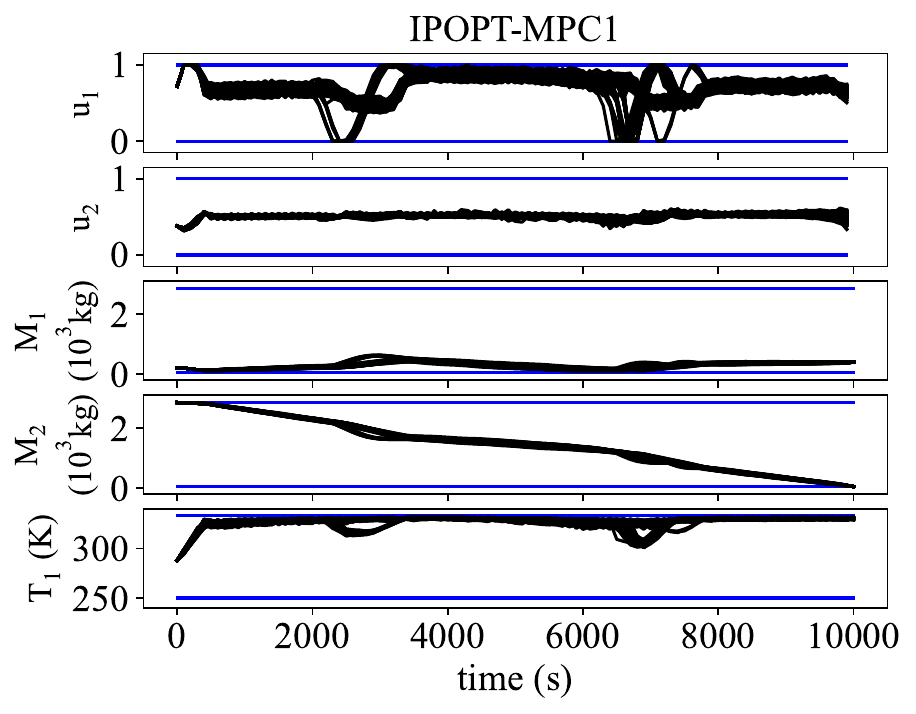}
        \label{fig:RandomTestCase_trajectories_IPOPT1}
    \end{subfigure}%
   ~
    \begin{subfigure}[b]{0.5\linewidth}
        \centering
        \includegraphics[width=\textwidth]{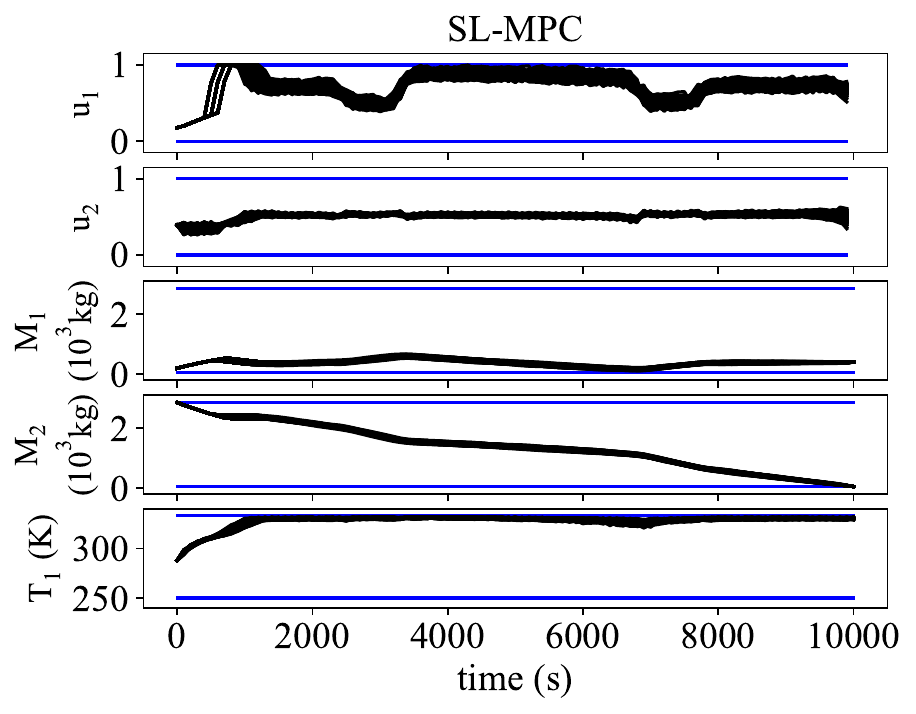} 
        \label{fig:RandomTestCase_trajectories_SLRNMPC}
    \end{subfigure}%
    \caption{State and input trajectories obtained when applying \IPOPTOne{} and \SLRNMPC{} in closed-loop to a 100 different random realizations of $\Dd_k$ and with $T_s=100$. Original constraints are shown in blue.}
\label{fig:RandomTestCase_XUtrajectories}
\end{figure}

\subsection{Iterations in \AlgOptimValidRefTrajectory{}} \label{sec:IterationsInAlgRefTraj}

Since the iterations in \AlgOptimValidRefTrajectory{} are not guaranteed to converge to a new optimized valid reference trajectory, Fig.~\ref{fig:MainTestCase_NumIters} gives some insight into the behavior of the three controller implementations in this regard. The figure shows, for each time step in the closed-loop simulations for \MainTestCase{}, how many iterations each controller took to converge to a solution, if converging at all. While \IPOPTTwo{} was able to find valid solutions within the allowed maximum number of iterations \IPOPTOne{} failed to converge within the maximum number of iterations in two subsequent time steps. The \SLRNMPC{} converged to an invalid solution at a few time steps at the beginning of the simulation within just one iteration. This issue stems from the nature of the SL algorithm, where it is possible that the algorithm converges quickly to an infeasible solution. Overall, there is a tendency of the \SLRNMPC{} to require slightly fewer iterations to converge compared to the IPOPT-based controllers. This likely helps the \SLRNMPC{} use less computation time, as detailed in Section \ref{sec:PerformanceComparison}. 


\begin{figure}[ht]
    \centering
    \includegraphics[width=0.9\columnwidth] {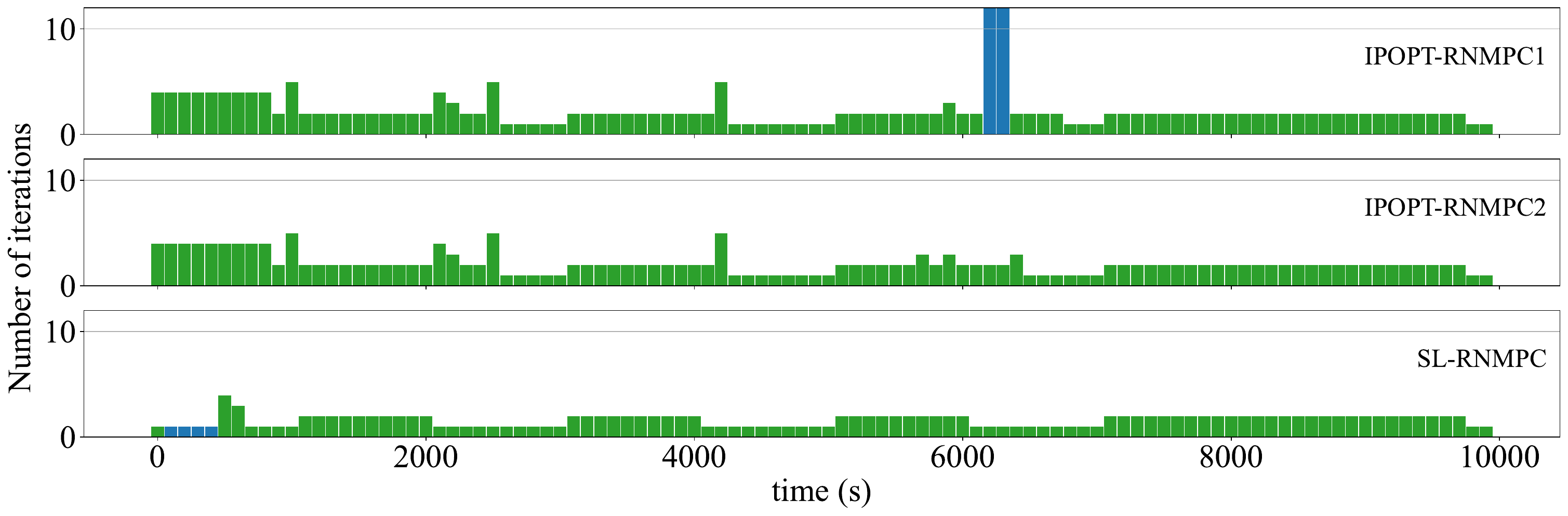}
    \caption{Number of iterations taken in \AlgOptimValidRefTrajectory{} by each controller for \MainTestCase{}. Time steps when the controllers failed to converge to a valid solution are shown in dark blue.}\hspace{9cm}
    \label{fig:MainTestCase_NumIters}
\end{figure}


\vspace{25pt}
\subsection{Controller performance }\label{sec:PerformanceComparison}
In this section, the performance of the proposed controllers is compared across multiple scenarios as well as with the work of Leeman et al\cite{leemanRobustOptimalControl2023}. 
\subsubsection{Comparison among implementations of the proposed controller}
Figs. \ref{fig:AllTestCases_ObjFunctionValues} to \ref{fig:AllTestCases_CompTimesMax} provide a comparison of the performance of all three controllers in terms of equalized objective function values, computed based on the resulting closed-loop trajectories, and computation times. From Fig. \ref{fig:AllTestCases_ObjFunctionValues}, it can be seen that the three controllers were able to achieve very similar objective function values. 

The similar objective function performance though comes with disparities in computational costs where, in terms of average computation time, the \SLRNMPC{} is 15\% to 56\% faster than \IPOPTOne{}. In terms of the maximum computation time, the results between both controllers are mixed, with a slightly higher value for the \SLRNMPC{} at $T_s=50$ and lower values otherwise. Another difference between the controllers, not shown in the figures, is the fraction of time that the controller spends solving the optimization problem as opposed to computing the constraint tightening for each new iteration: \IPOPTOne{} spent from 33\% to 50\% of the time in the solver while \SLRNMPC{} spent 16\% to 26\%.

\IPOPTTwo{} has clearly much higher computation times, both average and maximum, compared to the other two controllers. This stems from the fact that \IPOPTTwo{} is using the original equality constraint formulation in the optimization problem formulation, as discussed in Section \ref{sec:ControllerImplementations}. This formulation not only results in about twice the average computation times compared to \IPOPTOne{} but also much higher discrepancies in the maximum computation times. These differences between \IPOPTOne{} and \IPOPTTwo{} highlight one advantage of using \SLRNMPC{}: the optimization problem formulations in \SLRNMPC{} required no further manipulation of the original nonlinear dynamics function to obtain lower average computation times, as opposed to the manipulations done to implement \IPOPTOne{}.


\begin{figure}[ht]
    \centering
    \includegraphics[width=0.7\columnwidth] {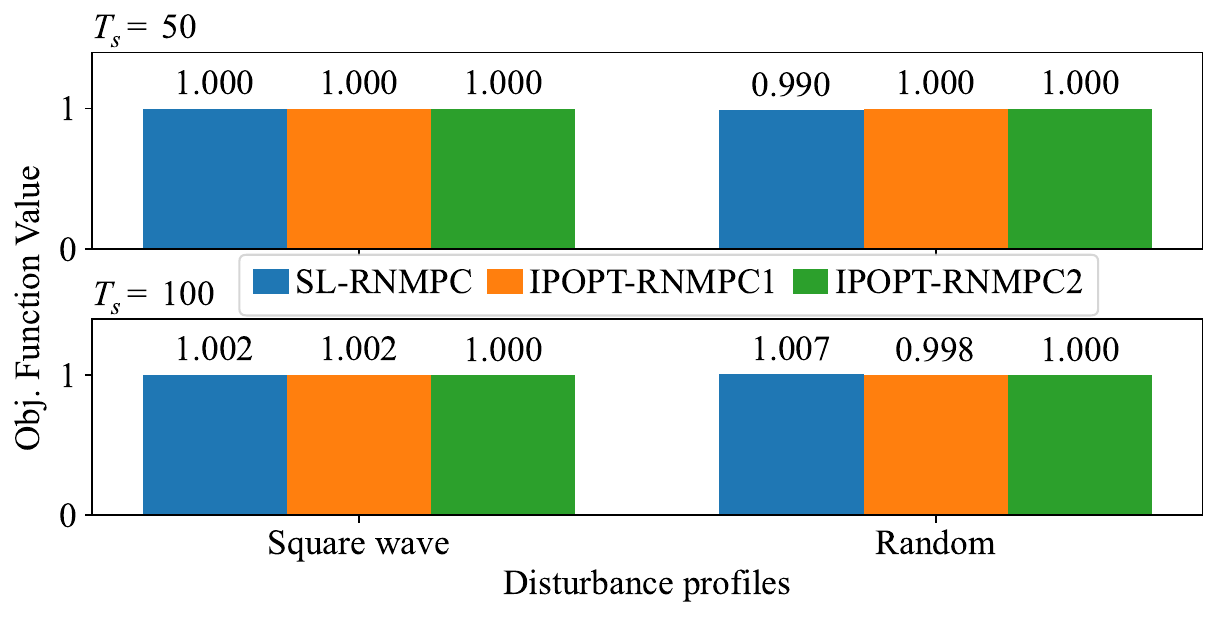}
    \caption{Equalized and normalized objective function values obtained for all test cases for the three controllers. The values shown here are normalized by the corresponding values obtained for the \IPOPTTwo{} in each test case. }
    \label{fig:AllTestCases_ObjFunctionValues}
\end{figure}



\begin{figure}[ht]
    \centering
    \includegraphics[width=0.7\columnwidth] {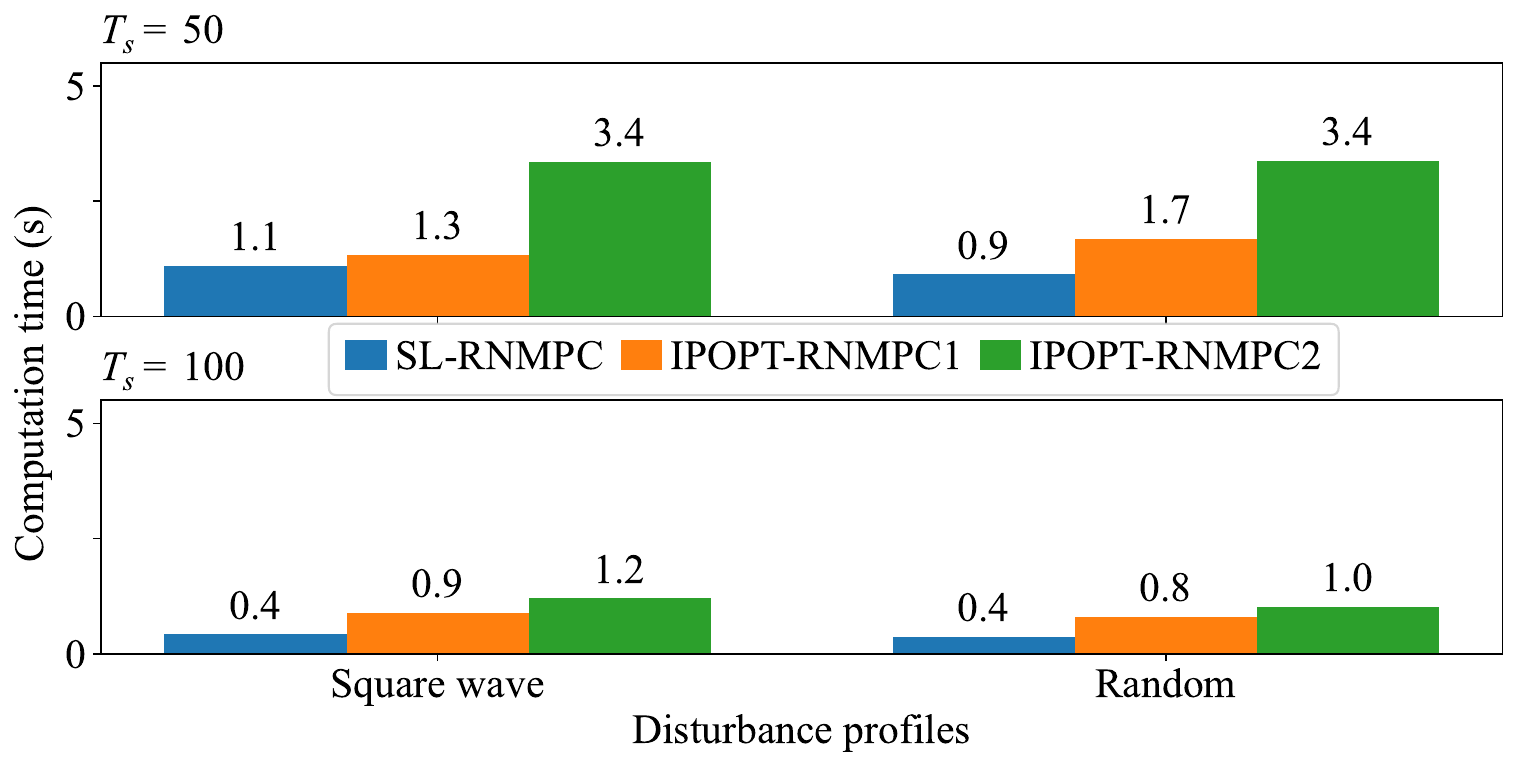}
    \caption{Average computation times obtained for all test cases for the three controllers.\hspace{7cm} }
    \label{fig:AllTestCases_CompTimesAvg}
\end{figure}



\begin{figure}[ht]
    \centering
    \includegraphics[width=0.7\columnwidth] {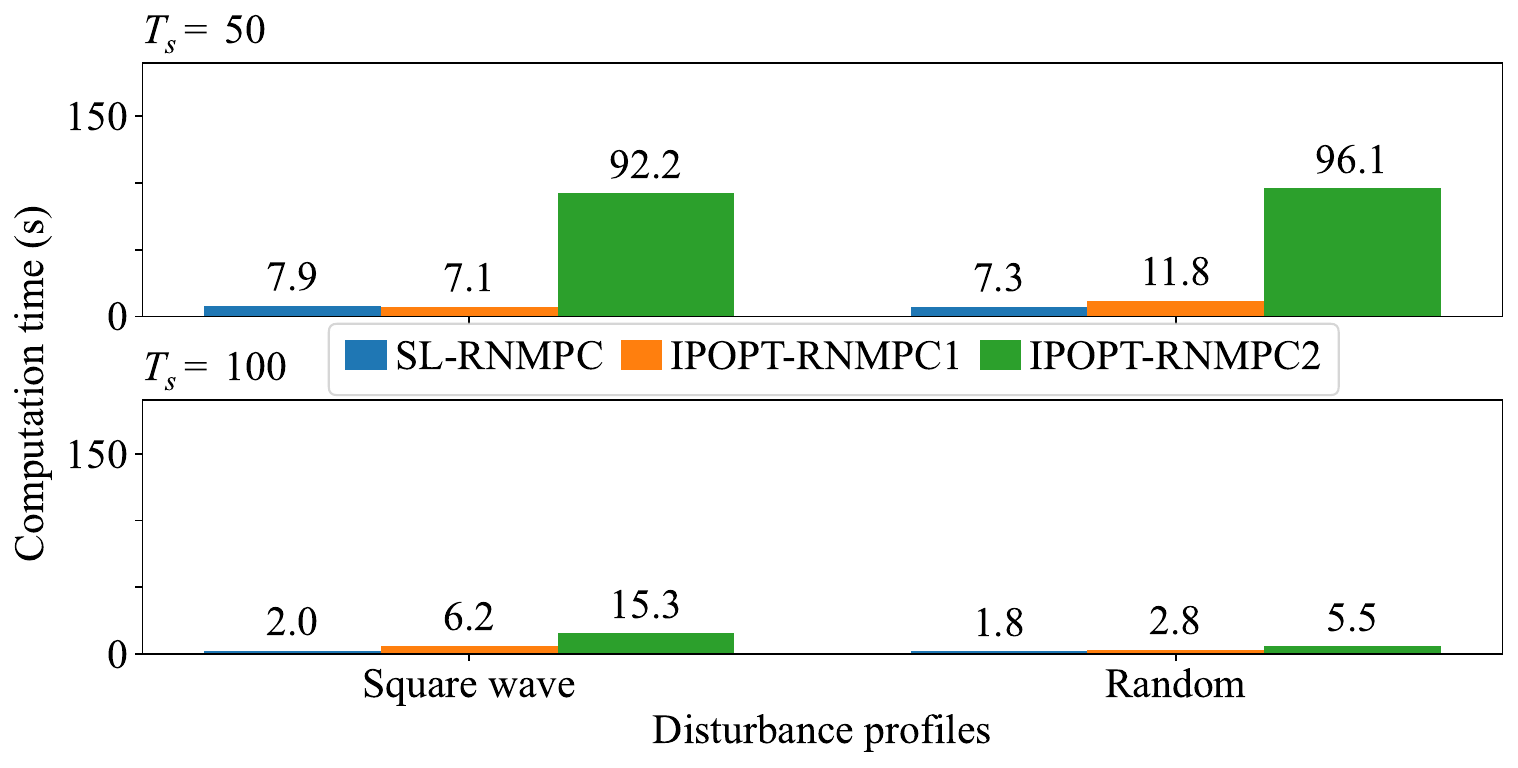}
    \caption{Maximum computation times obtained for all test cases for the three controllers.\hspace{9cm}}
    \label{fig:AllTestCases_CompTimesMax}
\end{figure}


\subsubsection{Comparison with existing technique \cite{leemanRobustOptimalControl2023}}

This section provides a comparison of the proposed formulation, specifically the \IPOPTOne{} implementation, with the controller formulation due to Leeman et al\cite{leemanRobustOptimalControl2023}, herein referred to as \SLS{} controller. Similarly to the formulation proposed in this work, \SLS{} controller relies on an LTV approximation of the nonlinear system dynamics. The effect of linearization errors, external disturbances, and parametric uncertainties is encoded into the parameters of the optimization problem and used to tighten the original constraints. Therefore, in their formulation, based on the concept of System Level Synthesis (SLS), both the nominal trajectory as well as the state feedback control law are optimized jointly. This differs from the strategy adopted in this work, where the optimized nominal trajectory and feedback control gains $K_{k+i|k}$ are obtained iteratively in separate steps of Algorithm \ref{alg:OptimValidRefTrajectory}.

In both techniques a key aspect is the size of the reachable sets of the error dynamics, i.e., the LTV system that represents deviations from the nominal optimized trajectory and when controlled with the LTV state-feedback controller. In this work these sets correspond to the $\bigE_{k+i|k}$ sets for the states and $K_{k+i|k}\bigE_{k+i|k}$ sets for the inputs. These sets are used for constraint tightening and therefore, the larger the sets the more conservatism is present in the overall solution. The comparison provided here focuses on the solution to the robust optimal control problem for a single time step, particularly in terms of the propagation of the error sets and the corresponding computation times. As a benchmark application example, the satellite post-capture stabilization example provided in Leeman et al\cite{leemanRobustOptimalControl2023} is replicated here. The reader is referred to that reference for more details on this application example. The results for the \SLS{} were obtained through the use of the MATLAB code provided by the authors\footnote{\url{https://gitlab.ethz.ch/ics/nonlinear-parametric-SLS}}. The same discrete-time model was used to obtain results for \IPOPTOne{} with the following exception: Leeman et al\cite{leemanRobustOptimalControl2023} explicitly deal with parametric uncertainties whereas in the implementation for \IPOPTOne{} such parametric uncertainty was converted into an additional bounded disturbance. The computations with \IPOPTOne{} were set to provide as much of an accurate comparison with the \SLS{} controller as possible: all model parameters used are the same as described in Leeman et al\cite{leemanRobustOptimalControl2023}. The same holds for the initial guess and the objective function used. The $Q$, $Q_f$, and $R$ matrices of the objective function are also used as the weighting matrices for the LQR controller in \IPOPTOne{}.  

Fig. \ref{fig:PlanarRigidBody_SetLimits} shows the resulting state and input reachable sets of the error dynamics considering a prediction horizon of 5 seconds ($N=10$) and 10 seconds ($N=20$). It can be seen, in general, that the reachable sets obtained using \IPOPTOne{} are smaller, or of about the same size, compared to the sets from the \SLS{} controller. Two exceptions are the input sets for $T=10$ seconds during the first half of the prediction horizon, where \IPOPTOne{} is more conservative. Notice that the sets obtained for the first half of the case when $T=10$ seconds do not need to match the sets for $T=5$ seconds since different time-varying feedback gains are used for each controller. Also, there are no sets shown for \IPOPTOne{} for the inputs $v_x$ and $v_y$ at the last time steps because the formulation in this work produces inputs up to time step $N$ while in Leeman et al\cite{leemanRobustOptimalControl2023} inputs up to time step $N+1$ are produced. Table \ref{tab:RigidBodyComparisonTable} shows the computation times for each case. Not only has \IPOPTOne{} a much lower computation time, about 50 times less for $T=5$ seconds, but also seems to scale better: doubling the horizon length essentially doubled the computation time for \IPOPTOne{} but caused a more than 20-fold increase for the \SLS{} controller. The objective function values of each nominal trajectory obtained are also shown in Table \ref{tab:RigidBodyComparisonTable} to highlight the fact that the use of \IPOPTOne{} did not cause degraded performance in terms of objective function values. For clarity, the nominal optimized trajectories obtained are not depicted in Fig. \ref{fig:PlanarRigidBody_SetLimits}.

Overall, these results suggest that the proposed formulation has the potential to generate less conservative trajectories, requires much lower computation times, and scales better with the prediction horizon all while providing guarantees of robust constraint satisfaction. These advantages come at the cost of not having guaranteed convergence to an optimal solution, which may require the application of the suboptimal fallback control option. Moreover, the \SLS{} controller has the advantage of providing robust performance guarantees\cite{leemanRobustOptimalControl2023}. The results obtained here show that the proposed formulation has promising potential, particularly in applications that have tight computation time requirements or long prediction horizons.

\begin{figure}[ht] 
    \begin{subfigure}[b]{0.5\linewidth}
        \centering
        \includegraphics[width=\textwidth]{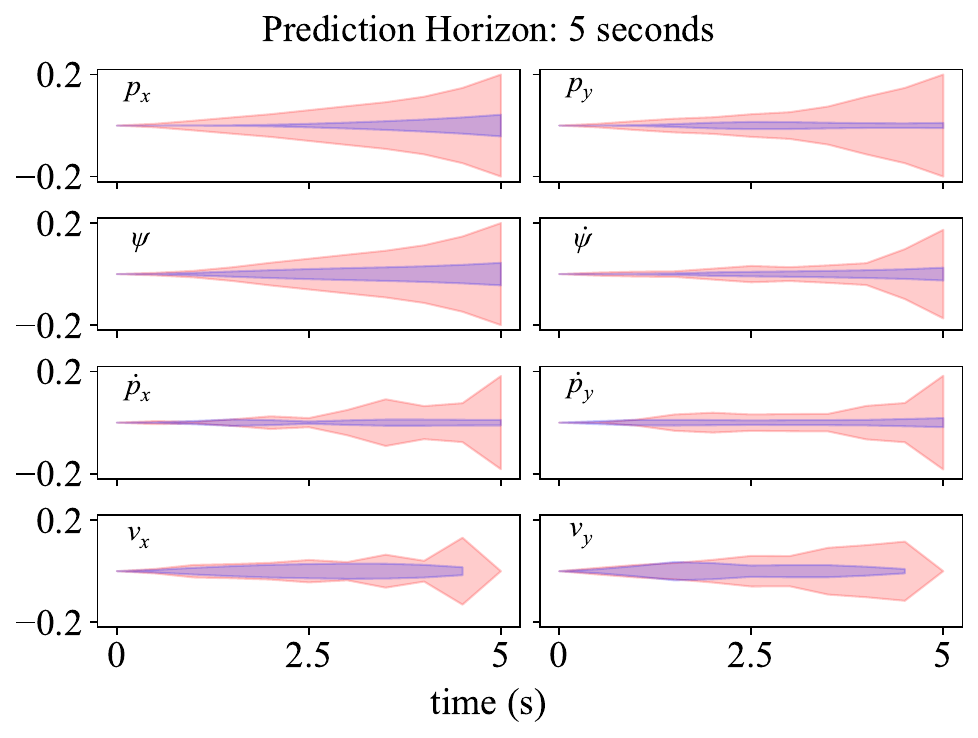}
        \label{fig:PlanarRigidBody_tfinal5_SetLimits}
    \end{subfigure}%
   ~
    \begin{subfigure}[b]{0.5\linewidth}
        \centering
        \includegraphics[width=\textwidth]{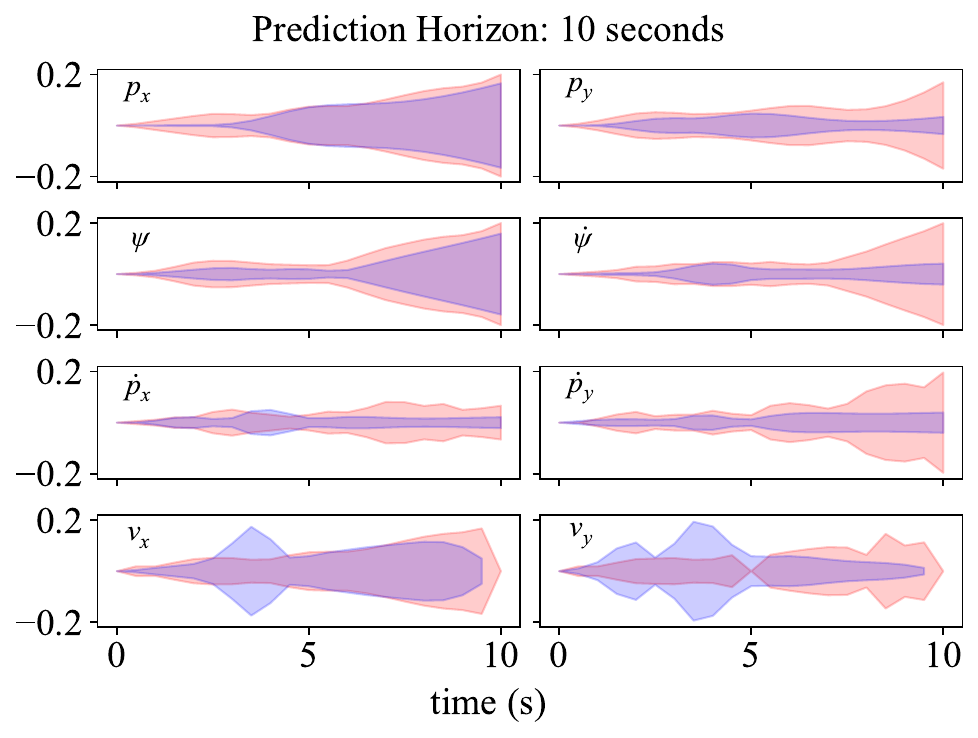} 
        \label{fig:PlanarRigidBody_tfinal10_SetLimits}
    \end{subfigure}%
    \caption{Propagation of the reachable sets of the error dynamics for the \IPOPTOne{} (blue) and \SLS{} (red) controllers for the satellite post-capture stabilization problem\cite{leemanRobustOptimalControl2023} with horizons of 5 and 10 seconds. The reader is referred to Leeman et al\cite{leemanRobustOptimalControl2023} for the meaning of each of the state and input variables shown here.}
\label{fig:PlanarRigidBody_SetLimits}
\end{figure}




\begin{table}[htb]
\scriptsize
\caption{\uppercase{Performance of the \IPOPTOne{} and the \SLS{} controller.\hspace{15cm}}}
\begin{center}
\label{tab:RigidBodyComparisonTable}
\scalebox{1.00}{
\begin{tabular}{p{2cm}<{\centering} p{2cm}<{\centering} p{2cm}<{\centering} p{2cm}<{\centering} p{2cm}<{\centering} }
\hline
\multirow{2}{*}{Controller}  & \multicolumn{2}{c}{$T=5$s} & \multicolumn{2}{c}{$T=10$s}  \\ 
            & Comp. Time (s)    & Obj. Func. Value  &  Comp. Time (s)    & Obj. Func. Value \\ \hline
\IPOPTOne{} &  \textbf{1.03} & 17.29  &  \textbf{1.96 }  &  17.69\\       
\SLS{}      &  49.19 & 21.63  &   1192.4  &  21.96\\ \hline 
\end{tabular}}
\end{center}
\end{table}



 \section{Conclusion} \label{sec:conclusion}
A novel shrinking-horizon robust MPC formulation for nonlinear discrete-time systems was presented. The proposed controller iteratively solves a NLP with tightened constraints to obtain reference trajectories that are used to provide optimized operation with guaranteed robust state and input constraint satisfaction guarantees for all future time steps. When the iterations fail to produce a valid reference trajectory, a suboptimal fallback control option is used that preserves the guaranteed constraint satisfaction.
The proposed controller was tested, with three different NLP solvers, using an aircraft FTMS model under different disturbance scenarios. All three controller implementations provided robust optimized system operation with overall less computational load from the controller based on a Successive Linearization solver when compared to the controllers that use IPOPT to solve the underlying NLP. A comparison with one of the existing techniques in the literature showed promising results in terms of conservatism, computation time, and scalability. Future work will focus on extending the proposed robust controller formulation to the receding-horizon case.


\bmsection*{Acknowledgments}

This material is based upon work supported by the Office of Naval Research under award number N00014-22-1-2247. Any opinions, findings, and conclusions or recommendations expressed in this material are those of the authors and do not necessarily reflect the views of the Office of Naval Research.



\DeclareRobustCommand{\VAN}[3]{#3}

\bibliography{root}


\appendix
\bmsection{LQR Controller} \label{sec:LQRControllerEqs}
The following equations can be used to derive a LTV LQR controller for a LTV system with dynamics matrices $A_k$ and $B_k$ using a dynamic programming approach. The $Q$ and $R$ matrices are tunable parameters. The $K_{k+i|k}$ gains are to be computed backwards in time as
\begin{align*}
    P_{k+N|k} &= Q, \label{eq:LQR_PN}\\
    P_{k+i|k} &= Q + A_{k+i}^T P_{k+i+1|k}A_{k+i} - A_{k+i}^T P_{k+i+1|k}B_{k+i}(R + B_{k+i}^T P_{k+i+1}B_{k+i})^{-1} B_{k+i}^TP_{k+i+1|k}A_{k+i}, \\
    K_{k+i|k} &= (R + B_{k+i}^T P_{k+i+1}B_{k+i})^{-1}B_{k+i}^TP_{k+i+1|k}A_{k+i}. 
\end{align*}

\end{document}